\def\rr{{\mathbb R}}
\def\lie{{\cal L}}
\def\f12{\frac 1 2}
\def\th{\theta}
\def\a{\alpha}
\def\b{\beta}
\def\ga{\gamma}
\def\hh{\mathcal{H}^{+}}
\def\m{\mathcal{M}}
\newcommand{\lapp}{\mbox{$\triangle \mkern-13mu /$\,}}
\newcommand{\epsi}{\mbox{$\epsilon \mkern-7.4mu /$\,}}
\newcommand{\gi}{\mbox{$g \mkern-8.8mu /$\,}}
\newcommand{\di}{\mbox{$d \mkern-9.2mu /$\,}}
\def\f12{\frac 1 2}
\newtheorem{remark}{Remark}[section]
\newtheorem{proposition}{Proposition}[section]
\newtheorem{mytheo}{Theorem}
\begin{document}
\title{Horizon Instability of Extremal Black Holes}% Remove command to get current date 

\author{Stefanos Aretakis\thanks{Princeton University, Department of Mathematics, Fine Hall, Washington Road, Princeton, NJ 08544, USA.}\thanks{  Institute for Advanced Study, Einstein Drive, Princeton, NJ 08540, USA.}}

%\date{June 29, 2012}

\maketitle
\begin{abstract}
We show that axisymmetric extremal horizons are unstable under  scalar perturbations. Specifically, we show that translation invariant derivatives of generic solutions to the wave equation do not decay along such horizons as advanced time tends to infinity, and in fact, higher order derivatives blow up. This result holds in particular for extremal Kerr--Newman and Majumdar--Papapetrou spacetimes and is in stark contrast with the subextremal case for which decay is known for all derivatives along the event horizon.
\end{abstract}

\tableofcontents

\section{Introduction}
\label{sec:Introduction}

 \textit{Extremal} black holes are central objects of study for  high-energy physics and have attracted significant interest in the mathematical community on account of their elaborate analytical features. In this paper we will exhibit instability properties of a general class of extremal black holes with respect to scalar perturbations. Remarkably, these instabilities are completely determined by local properties of extremal horizons and hence do not depend on the global aspects of spacetime. The present work generalises previous results of the author \cite{aretakis1,aretakis2} on extremal Reissner--Nordstr\"{o}m backgrounds.  

\subsection{The Main Results}
\label{sec:TheMainResults}

Our general set-up is the following. We consider 4-dimensional $\mathbb{R}\times\mathbb{T}^{1}$-symmetric Lorentzian manifolds $(\m,g)$ containing an extremal horizon $\hh$. Specifically, we assume that there exist a Killing vector field $V$ which is normal to a null hypersurface $\hh$ such that the following extremality condition is satisfied:
\[\nabla_{V}V=0:\text{ on }\hh.\]
We will refer to $\hh$ as the horizon. Note that $\hh$ does not necessarily have to be the event horizon of a black hole region, i.e.~$\hh$  may be an isolated extremal horizon.  We also assume that there exists an additional axial Killing field $\Phi$ tangential to $\hh$ with closed orbits. The precise geometric assumptions on $(\m,g)$ are described in Section \ref{sec:Assumptions}.
\begin{figure}[H]
	\centering
		\includegraphics[scale=0.12]{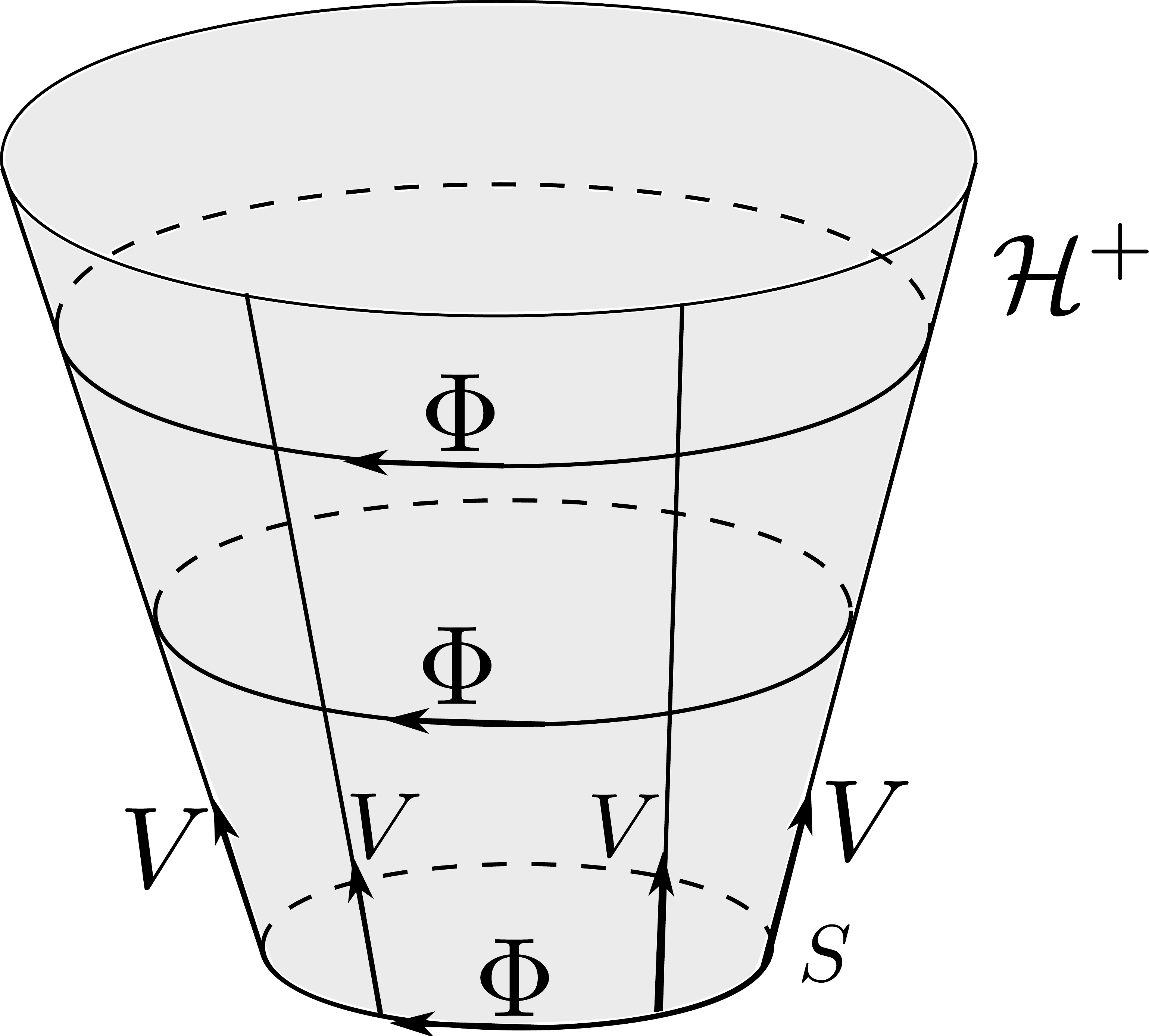}
	\label{fig:p421}
\end{figure}
We then study the behaviour of solutions to the \textit{wave equation}
\begin{equation}
\Box_{g}\psi=0.
\label{wave}
\end{equation}
We  prove that the first order derivatives generically \textbf{do not decay} along $\hh$, and in fact,  that the higher order derivatives asymptotically \textbf{blow up} along $\hh$ (Sections \ref{sec:ConservationLawAlongExtremalHorizon}, \ref{sec:NonDecayAndBlowUpAlongHh}). The genericity here refers to a condition on a section of $\hh$.
 
The source of the above instability results is a \textit{conservation law} that holds on $\hh$. This conservation law remarkably depends only on the local geometric properties of the horizon and not on global aspects of the spacetime. Hence we will not impose global hyperbolicity or discuss global well-posedness of the wave equation (see for example \cite{wellposedness}).

 These instabilities apply for Majumdar--Papapetrou multi black hole spacetimes (see Section \ref{sec:MajumdarPapapetrouMultiBlackHoles}) and extremal Kerr backgrounds allowing a cosmological constant $\Lambda\in\mathbb{R}$ (see Section \ref{sec:ExtremalKerr}). Moreover, for the latter backgrounds (with $\Lambda=0$) we also prove that the \textit{energy} of higher order derivatives generically blows up. 

The author has previously studied in \cite{sa10,aretakis1,aretakis2} the wave equation on a simpler model of extremal black holes, namely the spherically symmetric charged Reissner--Nordstr\"{o}m black holes. Solutions on such backgrounds were shown to exhibit both stability and instability properties. The analogues of the stability results for extremal Kerr were presented in \cite{aretakis3} for axisymmetric solutions. We describe these results in more detail in the next subsection where we also put them into context by briefly summarising previous mathematical work on the linear wave equation on black hole backgrounds.

\subsection{The Black Hole Stability Problem}
\label{sec:TheBlackHoleStabilityProblem}

The first step in investigating the dynamic stability (or instability) of a spacetime (see \cite{lecturesMD}) is by understanding the dispersive properties of the wave equation \eqref{wave}.

Work on the wave equation on black hole spacetimes began in 1957 for the Schwarzschild case with the pioneering work of Regge and Wheeler \cite{regge}, but the first complete quantitative dispersive result was obtained only in 2005 by Dafermos and Rodnianski \cite{redshift}, where the authors introduced a vector field that captures in a stable manner the so-called \textit{redshift effect} on the event horizon. The origin of their constructions lies in the positivity of the \textit{surface gravity} $\kappa$ given by \[\nabla_{V}V=\kappa V,\]
where $V$ is Killing and normal to the event horizon. During the last decade remarkable progress was made and the definitive understanding of decay on general \textit{subextremal} Kerr backgrounds was presented in \cite{tria} (see also \cite{damon, gusmu1,volker}). All these developments use in one way or another the redshift effect. For an exhaustive list of references see \cite{lecturesMD}.

 The analysis of the wave operator on extremal black holes was initiated in \cite{aretakis1,aretakis2} (see also \cite{sa10}), where definitive stability and instability results were obtained for extremal Reissner--Nordstr\"{o}m backgrounds. In these spacetimes, let $\Sigma_{0}$ be a spacelike hypersurface crossing $\hh$ and  $\Sigma_{\tau}=\varphi^{V}_{\tau}(\Sigma_{0})$, where $\varphi_{\tau}^{V}$ denotes the flow of $V$. Let also  $E_{\Sigma_{\tau}}[\psi]$ denote the energy measured by a local observer on $\Sigma_{\tau}$ and $Y$ denote a $\varphi_{\tau}^{V}$-invariant transversal to $\hh$ vector field.
\begin{figure}[H]
	\centering
		\includegraphics[scale=0.12]{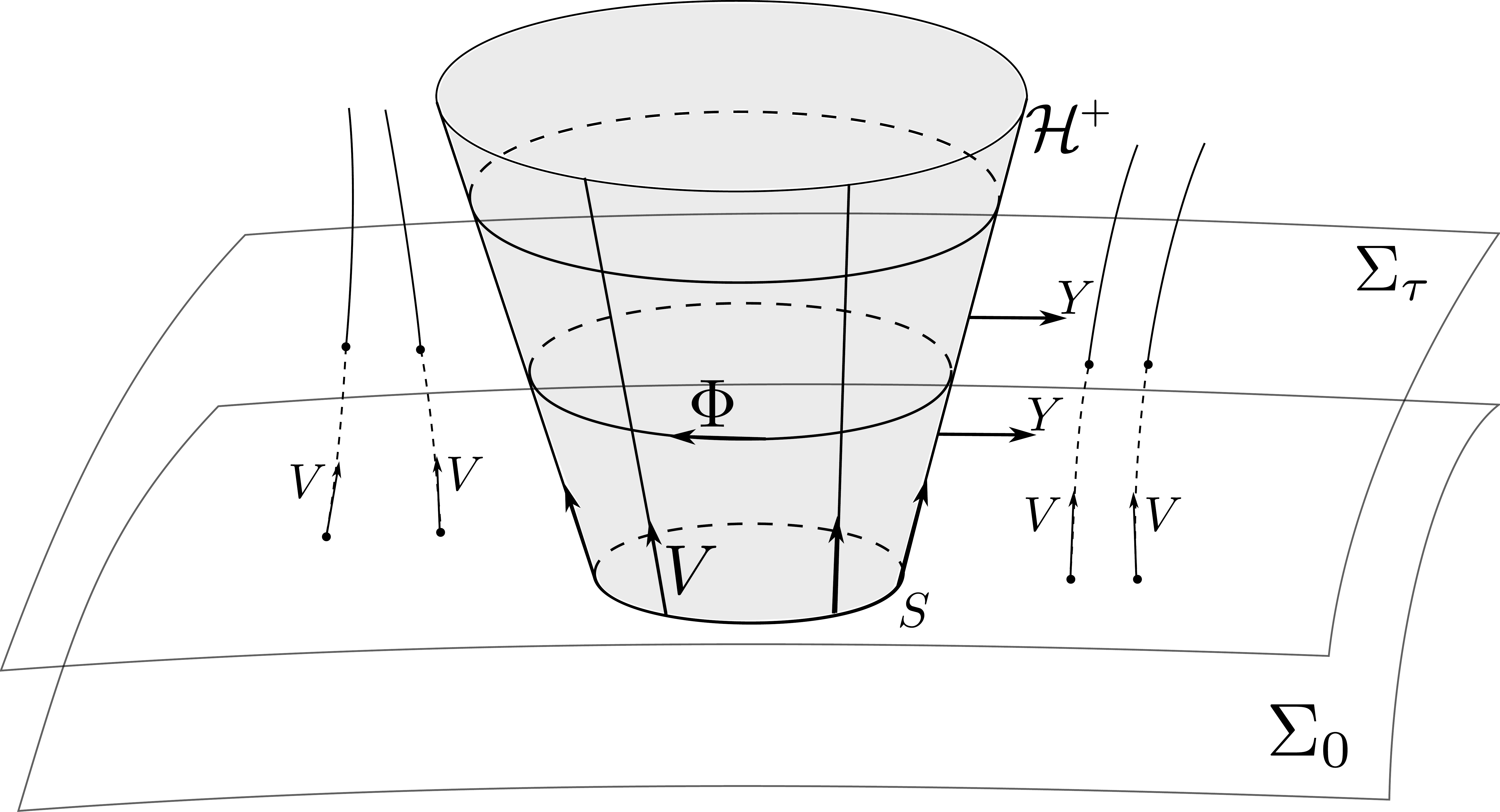}
	\label{fig:p3}
\end{figure} 
The main results, for extremal Reissner--Nordstr\"{o}m, of the analysis of \cite{aretakis1,aretakis2} include the following:

1. \textbf{Pointwise decay for $\psi$}: For all solutions $\psi$ to \eqref{wave} which arise from initial data with bounded energy we have $\left\|\psi\right\|_{L^{\infty}(\Sigma_{\tau})}\rightarrow 0$ as $\tau\rightarrow+\infty$.

2. \textbf{Decay of degenerate (at $\hh$) energy of $\psi$}: If the horizon $\hh$ corresponds to $r=M$, where $M>0$, then $E_{\Sigma_{\tau}\cap\left\{r\geq r_{0}>M\right\}}[\psi]\rightarrow 0$ as $\tau\rightarrow+\infty$. 

3. \textbf{Non-decay along $\hh$}: For generic $\psi$, we have that $|Y\psi|$ does not decay along $\hh$.

4. \textbf{Pointwise blow-up of higher order derivatives along $\hh$}: For generic $\psi$, we have $|Y^{k}\psi|\rightarrow+\infty$ along $\hh$, for all $k\geq 2$, as $\tau\rightarrow+\infty$.

5. \textbf{Energy blow-up of higher order derivatives:} For generic $\psi$, we have $E_{\Sigma_{\tau}}[Y^{k}\psi]\rightarrow +\infty$, for all $k\geq 1$, as $\tau\rightarrow+\infty$.

We remark that the latter non-decay and blow-up results are in sharp contrast with the non-extremal case where decay holds for all higher order derivatives of $\psi$ along $\hh$. Furthermore, we note that sharp quantitative estimates have been shown in \cite{aretakis1,aretakis2} for each angular frequency $l\in\mathbb{N}$. In order, however, to simplify the notation above, we have only presented the corresponding qualitative results. Results analogous to $1-2$ above have been shown in \cite{aretakis3} for axisymmetric solutions on extremal Kerr backgrounds. Note, however, that obtaining instability results for non-spherically symmetric extremal black holes (which is the topic of the present paper) had remained an open problem.

For other results regarding extremal black holes see the discussion in Section \ref{sec:TheVacuumAndElectrovacuumReduction}.

\section{The Geometric Set-up}
\label{sec:TheGeometricSetting}

\subsection{Assumptions}
\label{sec:Assumptions}

We consider 4-dimensional Lorentzian manifolds $(\m,g)$ which satisfy the following properties:
\begin{description}
	\item[\textbf{A1}.] $(\m,g)$ is $\mathbb{R}\times\mathbb{T}^{1}$-symmetric, i.e.~it admits two commuting Killing vector fields $V,\Phi$. The vector field $V$ has complete integral curves homeomorphic to the line $\rr$. The vector field $\Phi$ has closed spacelike integral curves and vanishes on a 2-dimensional submanifold $\mathcal{A}$, called the axis. 
	\item[\textbf{A2}.] There exists a null hypersurface $\hh$, which we will refer to as the horizon, such that $\Phi$ is tangential to $\hh$ and $V$ is normal to $\hh$ and also satisfies
	\begin{equation}
\nabla_{V}V=0: \text{ on }\hh.
\label{extrem}
\end{equation}
The condition \eqref{extrem} captures the extremality of $\hh$. 
\item[\textbf{A3}.] The topology of the horizon sections is spherical, i.e.~$\hh$ contains a spacelike 2-surface $S$ homeomorphic to the 2-sphere.
\item[\textbf{A4}.] The Killing fields $V,\Phi$ satisfy the Papapetrou condition, namely that the distribution of the planes orthogonal to the planes spanned by $V$ and $\Phi$ is integrable.
\end{description}
\begin{figure}[H]\centering
		\includegraphics[scale=0.135]{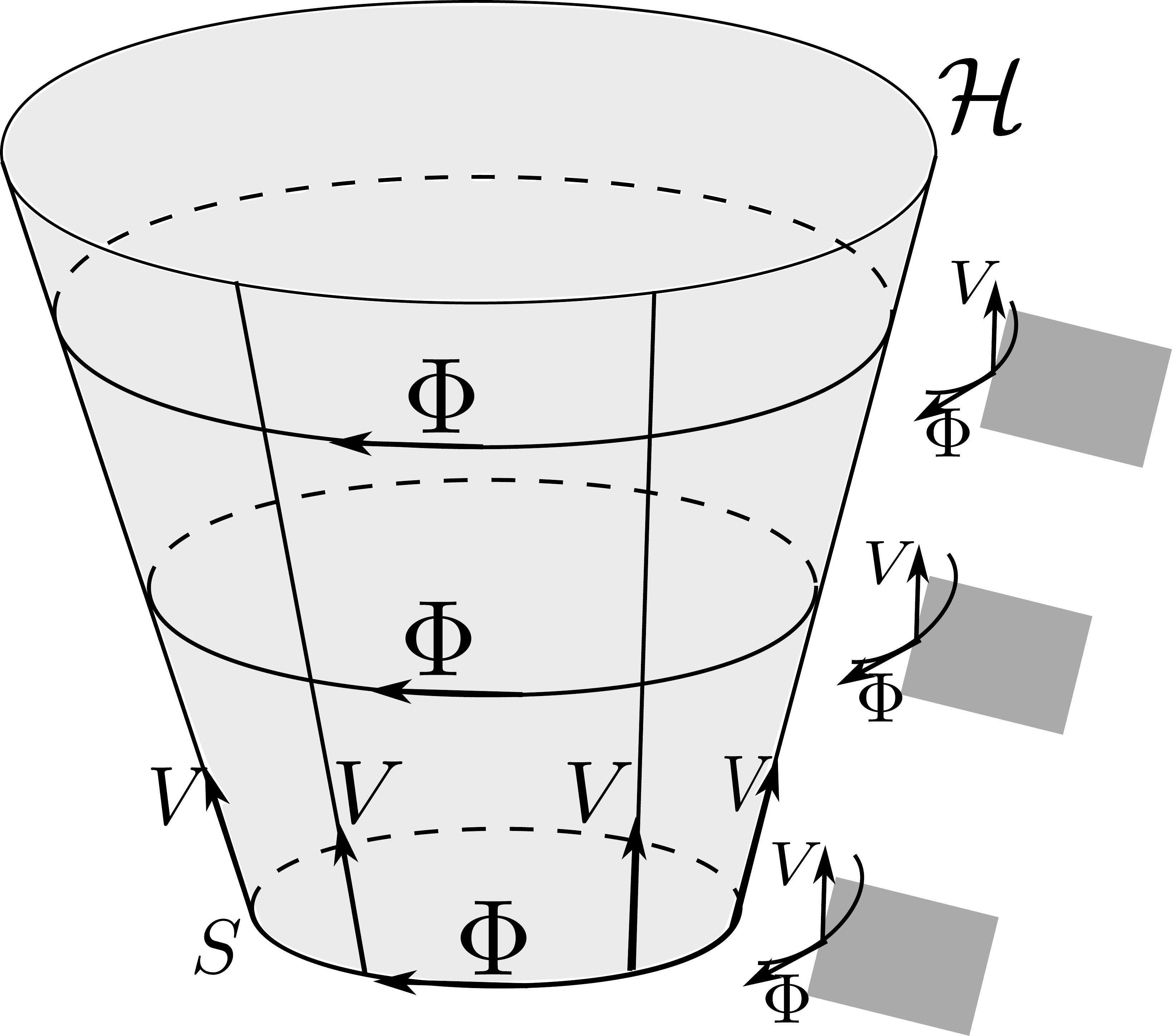}
	\label{fig:p4}
\end{figure}

\subsection{The Adapted Coordinate System}
\label{sec:TheAdaptedCoordinateSystem}

Under the assumptions A1--A4, we obtain the following
\begin{proposition}
Let $(\m,g)$ be a 4-dimensional Lorentzian manifold which satisfies the assumptions $A1-A4$ of Section \ref{sec:Assumptions}. Then, there exists a coordinate system $(v, r, \th,\phi)\in (-\infty,+\infty)\times (-\epsilon,\epsilon)\times (0,\pi)\times (0,2\pi),$ for some $\epsilon>0$ with $\hh=\left\{r=0\right\}$, $\partial_{v}=V, \partial_{\phi}=\Phi$ and such that if we denote $ Y=\partial_{r}, \Theta=\partial_{\theta}$ then
\begin{equation*}
\Theta\perp V,\ \ \ \  \Theta\perp\Phi,
\end{equation*}
\label{coordprop}
everywhere in the domain of the above system.
\end{proposition}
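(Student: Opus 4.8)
The plan is to construct the chart in three stages: angular coordinates on a single horizon section, propagation along $\hh$ by the Killing flow of $V$, and a transversal extension off $\hh$ in which the Papapetrou condition A4 supplies the orthogonality of $\Theta$. First I would fix a spacelike section $S_{0}\cong S^{2}$ of $\hh$, which exists by A3. On $S_{0}$ the axial field $\Phi$ is tangent with closed orbits, vanishing only at the two points where $S_{0}$ meets the axis $\mathcal{A}$ and spacelike elsewhere. I would set $\partial_{\phi}=\Phi$ and let $\theta$ parametrise the integral curves of the one-dimensional distribution in $TS_{0}$ orthogonal to $\Phi$, so that on $S_{0}$ one has $\partial_{\phi}=\Phi$, $\partial_{\theta}\perp\Phi$, with $(\theta,\phi)\in(0,\pi)\times(0,2\pi)$.

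Next I would extend $(\theta,\phi)$ to all of $\hh$ by Lie-dragging along the null generators, declaring $\partial_{v}=V$ on $\hh$ with $\theta,\phi$ constant along $V$. Both required orthogonalities then already hold on the entire horizon. Indeed $\Theta$ is tangent to $\hh$ and $V$ is the null normal of this null hypersurface, so $V$ is orthogonal to every tangent vector of $\hh$, and in particular $g(\Theta,V)=0$; moreover, because the flow of $V$ is an isometry fixing $\Phi$ (the Killing fields commute, A1) and fixing $\Theta$ (Lie-dragged), $g(\Theta,\Phi)$ is constant along each generator and hence remains equal to its value $0$ on $S_{0}$.

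The essential step is the transversal extension, and here I would use A4. Since $\Pi=\mathrm{span}(V,\Phi)$ is isometry-invariant, so is its orthogonal complement $\Pi^{\perp}$, and A4 states precisely that $\Pi^{\perp}$ is integrable. Away from $\hh$ I would then define $\Theta$ as the horizontal lift into $\Pi^{\perp}$ of the $\theta$-coordinate field on the two-dimensional orbit space: integrability guarantees this lift is again a coordinate field, and being valued in $\Pi^{\perp}$ it satisfies $g(\Theta,V)=g(\Theta,\Phi)=0$ by construction, while invariance of the connection under the commuting isometries gives $[\Theta,V]=[\Theta,\Phi]=0$, so $\partial_{v}=V$ and $\partial_{\phi}=\Phi$ persist off the horizon. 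The coordinate $r$ is then taken as a group-invariant transversal to $\hh$ with $\{r=0\}=\hh$ and $[\Theta,\partial_{r}]=0$, so that $(v,r,\theta,\phi)$ is a genuine chart with $Y=\partial_{r}$.

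The main difficulty is the degeneration of this orthogonal structure precisely at $\hh$. On the horizon $V$ is null, so $V\in\Pi\cap\Pi^{\perp}$, the orbit plane $\Pi$ itself becomes null, and the integral surfaces of $\Pi^{\perp}$ turn tangent to $\hh$; consequently the orthogonal coordinate that circularity would naturally pair with $\theta$ degenerates at $r=0$. This is exactly why one cannot ask $Y\in\Pi^{\perp}$ as well: a genuine transversal must satisfy $g(Y,V)\neq0$ on $\hh$, which is incompatible with $Y\perp V$, so only $\Theta$ — not $Y$ — can be kept orthogonal to the orbit plane. The delicate point is therefore to show that the horizontal lift $\Theta$ extends smoothly up to and across $r=0$ and matches the section construction of the first two stages there, while choosing a transversal $r$ (a Gaussian-null-type coordinate built from the smooth Killing normal $V$) that is regular across the horizon; the mild coordinate degeneracy at the axis $\theta\in\{0,\pi\}$, where $\Phi$ vanishes, is handled separately.
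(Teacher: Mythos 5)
Your first two stages (angular coordinates on a section $S_{0}$, then Lie-dragging along the generators so that $\Theta\perp V$ and $\Theta\perp\Phi$ hold on all of $\hh$) are fine, but the proof stalls exactly at the step you yourself flag as ``delicate'': the smooth extension of $\Theta$ across $r=0$. This is not a technicality that can be ``handled separately'' --- it is the entire content of the proposition. Away from $\hh$ the plane $\Pi=\mathrm{span}(V,\Phi)$ is non-degenerate, so $T\m=\Pi\oplus\Pi^{\perp}$ and the horizontal lift of a coordinate field on the orbit space into $\Pi^{\perp}$ is well defined and unique. On $\hh$, however, $V\in\Pi\cap\Pi^{\perp}$, so $\Pi+\Pi^{\perp}$ is only three-dimensional (it equals $T\hh$): the quotient projection maps $\Pi^{\perp}$ onto a one-dimensional subspace, the lift of the $\theta$-direction is no longer unique (it is ambiguous up to multiples of $V$), and nothing in your argument shows that the off-horizon lift has a smooth limit at $r=0$, let alone that this limit agrees with the $\Theta$ you built on $\hh$ in stages one and two. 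Likewise the group-invariant transversal $r$ with $[\Theta,\partial_{r}]=0$ is asserted rather than constructed, and since your $\Theta$ exists at that stage only off the horizon, requiring $\partial_{r}$ to commute with it up to $\hh$ is circular.

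The paper's proof never has to cross this degeneracy in the transversal direction, which is precisely what makes it work. It builds the chart on an axisymmetric \emph{spacelike} hypersurface $\Sigma_{0}$ crossing $\hh$: because $V$ is transversal to $\Sigma_{0}$, the orthogonal (Papapetrou) leaves intersect $\Sigma_{0}$ in a regular one-dimensional foliation by curves, uniformly across $S=\hh\cap\Sigma_{0}$, with no degeneration there. Rotating these curves by the flow of $\Phi$ gives the foliation of $\Sigma_{0}$ by axisymmetric spheres $S^{2}_{r}$, whose parameter is the coordinate $r$, while $\theta$ parametrises the curves; a separate argument (the surface $\mathcal{O}=\bigcup_{t}\varphi^{V}_{t}(\gamma)\subset\hh$ is itself an orthogonal leaf) identifies $S$ with one of the spheres, so that $\hh=\{r=0\}$. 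The only propagation then required is along the Killing flow of $V$, under which the orthogonality relations are conserved for free: $\lie_{V}g(\Theta,V)=g(\lie_{V}\Theta,V)+g(\Theta,\lie_{V}V)=0$, and similarly for $g(\Theta,\Phi)$. In short, the paper propagates transversally \emph{inside a spacelike slice}, where the orthogonal foliation is regular, and propagates along the horizon by the isometry, where orthogonality is automatic; you propagate off the horizon, which is exactly where the orthogonal structure degenerates. To salvage your route you would need an actual proof of the smooth matching of the horizontal lift at $r=0$; the paper's construction shows this can be bypassed altogether.
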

\begin{proof}
Let $\mathcal{A}$ denote the axis of the axisymmetric action, namely the set of points for which $\Phi=0$. This is a 2-dimensional timelike manifold. 

Let $\Sigma_{0}$ be an axisymmetric (i.e.~$\Phi$ is tangential to $\Sigma_{0}$) spacelike hypersurface crossing the horizon $\hh$. Since we can write (in a non-unique manner)  $\Sigma_{0}=(-\epsilon, \epsilon)\times S^{2}$ where $S^{2}$ are 2-dimensional axisymmetric surfaces, and $\Phi$ vanishes exactly at two points on each $S^{2}$, the intersection of the axis $\mathcal{A}$ and $\Sigma_{0}$ is the union of two disjoint curves $\gamma_{1}$ and $\gamma_{2}$.

Next, we consider the foliation whose leaves are the orthogonal manifolds, i.e.~the manifolds whose tangent space at each point is orthogonal to the plane spanned by the vector fields $V,\Phi$. The existence (and uniqueness) of this foliation follows from our integrability assumption A4 on the spacetime geometry. We will show that, although $\Sigma_{0}$ can be foliated using 2-dimensional axisymmetric surfaces in many different ways, the integrability assumption A4 implies the existence of such a 2-dimensional foliation of $\Sigma_{0}$ with additional properties.

The intersection of the orthogonal integral submanifolds with $\Sigma_{0}$ gives rise to a 1-dimensional foliation $\mathcal{F}_{1}$ of $\Sigma_{0}$. Each leaf of this foliation is a curve whose endpoints are on $\gamma_{1}$ and $\gamma_{2}$. Indeed, the orthogonal manifolds are 2-dimensional and also dim$\Sigma_{0}=3$. Moreover, $V$ is not orthogonal to $\Sigma_{0}$ for $\epsilon$ sufficiently small and so the leaves are curves (and not 2-dimensional manifolds).  Clearly, the image of any such leaf under any diffeomorphism of the flow $\varphi^{\Phi}_{s}, 0\leq s\leq 2\pi$, of $\Phi$ is another leaf of the same foliation of $\Sigma_{0}$. Hence the above 1-dimensional foliation of $\Sigma_{0}$ gives rise to a 2-dimensional foliation $\mathcal{F}_{2}$ of $\Sigma_{0}$; the leaves of the latter foliation being the 2-dimensional images of the leaves of $\mathcal{F}_{1}$ under the flow of $\Phi$.
 
From now on we consider these 2-dimensional surfaces $S^{2}$, the leaves of $\mathcal{F}_{2}$, which we parametrize by a smooth coordinate $r$.  Clearly, the spheres $S^{2}_{r}$ are axisymmetric, i.e.~$\Phi$ is tangential to $S^{2}_{r}$. The two points on $S^{2}_{r}$ where $\Phi=0$ are called poles of $S^{2}_{r}$. We will next define an adapted coordinate system on $\Sigma_{0}$ and also show that one of these surfaces coincides with $S=\hh\cap\Sigma_{0}$. 
\begin{figure}[H]
	\centering
		\includegraphics[scale=0.18]{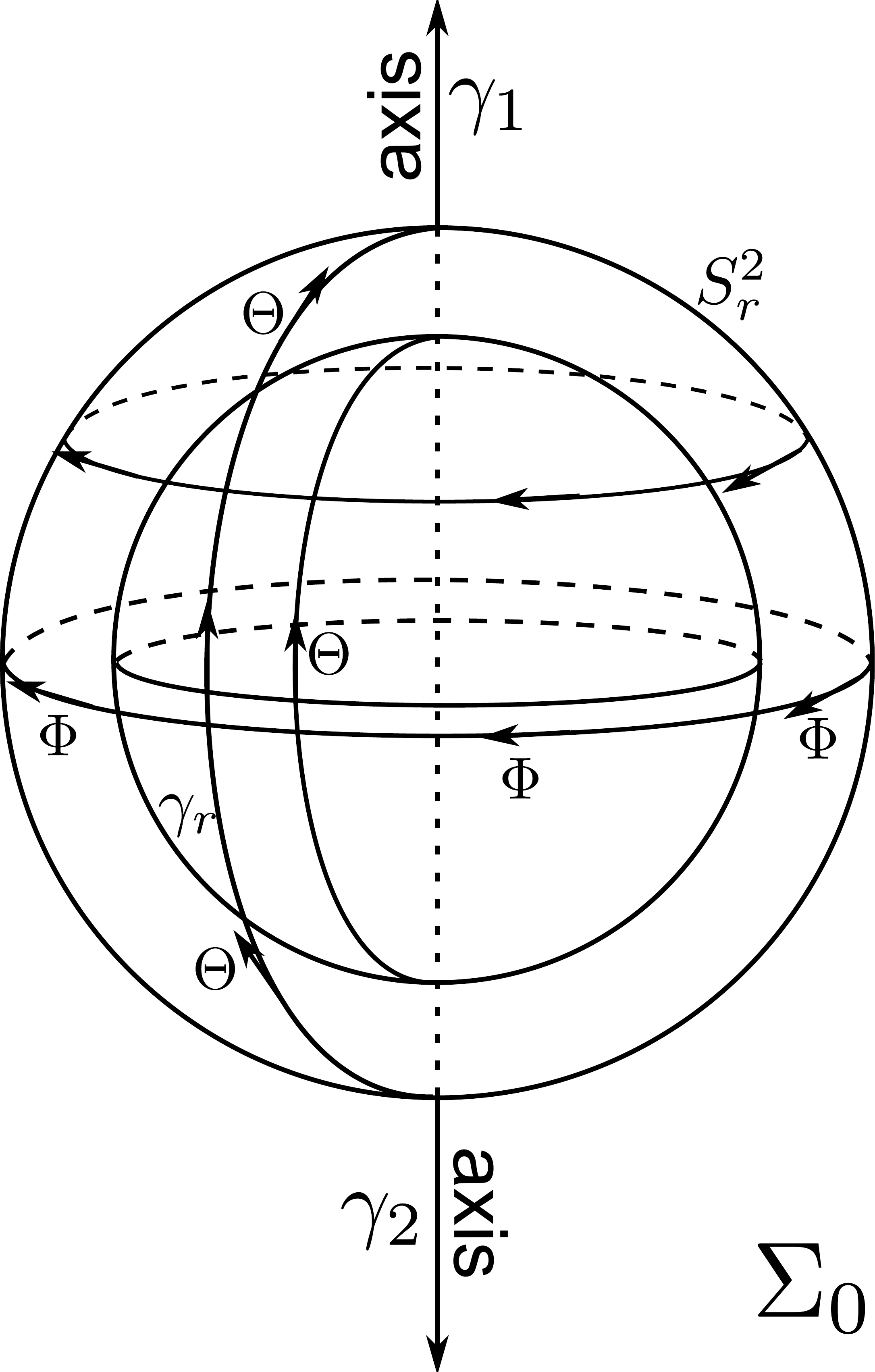}
	\label{fig:p1}
\end{figure}
Consider a leaf $\gamma_{r}$ of the foliation $\mathcal{F}_{1}$  on each  $S^{2}_{r}$ such that $\gamma_{r}$ depends smoothly on $r$. We introduce a smooth monotonic coordinate $\th$ on each $\gamma_{r}$ such that $\th=0$ at the north pole and $\th=\pi$ at the south pole of $S^{2}_{r}$. We extend $\theta$ globally on $S^{2}_{r}$ (and thus on $\Sigma_{0}$) under the condition $\lie_{\Phi}\theta=0$.
\begin{remark}
One `Hamiltonian' way to introduce the coordinate $\th$ is the following: Let  $\gi$ be the induced metric on the spheres $S_{r}^{2}$, $\epsi$ the area form and $\di$ the exterior derivative. Consider the 1-form $\a=\frac{2\pi^{2}}{A}i_{\Phi}\epsi=\frac{2\pi^{2}}{A}\epsi(\Phi,\cdot),$
where $A=\int_{S_{r}^{2}}\epsi$ is the area of $S^{2}_{r}$. By Cartan's formula we have $\di\a=\frac{2\pi^{2}}{A}\di(i_{\Phi}\epsi)=\frac{2\pi^{2}}{A}\lie_{\Phi}\epsi,$ since $\di\epsi=0$. However, $\lie_{\Phi}\epsi=0$, the vector field $\Phi$ being Killing. By virtue of the Poincar\'e lemma we obtain a globally defined function $\th$  such that $\di\th=\a$. We remark that  $\th$ is axisymmetric and has exactly two extremal points precisely at the poles $\mathcal{N},\mathcal{S}$ of $S^{2}_{r}$, and  by choosing an appropriate additive constant, $\th(\mathcal{N})=0, \ \th(\mathcal{S})=\pi$. Moreover, $g(\Theta,\Theta)=\frac{A^{2}}{4\pi^{4}}\frac{1}{g(\Phi,\Phi)}.$
\label{rehami}
\end{remark}
Let also $\phi$ denote the affine function of $\Phi$ (i.e.~$\Phi\phi=1$) such that $\phi=0$ precisely on the curves $\gamma_{r}$. Note that $\phi$ is periodic with period $2\pi$. The above construction gives rise to a coordinate system $(r,\th,\phi)$ of $\Sigma_{0}$ such that $\partial_{\phi}=\Phi$. Hence, introducing  the coordinate vector fields $Y, \Theta$ we obtain
\begin{equation*}
Y=\partial_{r},\ \ \ \ \Phi=\partial_{\phi},\ \ \ \  \Theta=\partial_{\th}.
\end{equation*}
By construction we have $V\perp \Theta$ and $\Phi\perp\Theta$ everywhere on $\Sigma_{0}$.

We now show that $S=\hh\cap\Sigma_{0}$ coincides with one of the spheres $S^{2}_{r}$.    The (orthogonal) Frobenius condition 
\begin{equation*}
\Phi_{\flat}\wedge \di\Phi_{\flat}=0
\end{equation*}
is trivially satisfied. Here $\gi$ denotes the induced metric on $S$, $\di$ the exterior derivative and  $\Phi_{\flat}$ denotes the 1-form on $S$ that corresponds to $\Phi$ under $\gi$. Hence, there exists a curve $\gamma$ on $S$ which is orthogonal to $\Phi$ and connects the north pole $\mathcal{N}$ and the south pole $\mathcal{S}$  of $S$. We consider now the 2-dimensional manifold
\begin{equation*}
\mathcal{O}=\bigcup_{t\in\rr}\varphi^{V}_{t}(\gamma),
\end{equation*}
where $\varphi^{V}_{t}$ is the flow of $V$. Since $V$ is tangential to $\hh$, we have $\mathcal{O}\subset\hh$, and moreover, $\mathcal{O}$ coincides with one of the orthogonal integral manifolds. Indeed, at each point $p\in\varphi^{V}_{t}(\gamma)$ we have
\begin{equation*} 
T_{p}\mathcal{O}=\text{span}\Big(V,d\varphi^{V}_{t}(\overset{.}{\gamma})\Big)\subset T_{p}\hh.
\end{equation*}
Hence, $V\perp T_{p}\mathcal{O}$; moreover, $\Phi\perp T_{p}\mathcal{O}$. Indeed, $g(V,\Phi)=0$ on $\hh$ and  \[g(\Phi,d\varphi^{V}_{t}(\overset{.}{\gamma}))=g(d\varphi^{V}_{t}(\Phi),d\varphi^{V}_{t}(\overset{.}{\gamma}))=g(\Phi,\overset{.}{\gamma})=0,\] by the construction of $\overset{.}{\gamma}$ on $S$.  Note that for the above equality we used that the flow of the Killing field $V$ consists of isometries and that $[V,\Phi]=0$. Hence,  the curve $\gamma$ is a leaf of the foliation $\mathcal{F}_{2}$,  $\mathcal{O}$ being an orthogonal integral submanifold, and therefore, indeed $S$, in view of its axisymmetry, coincides with one of the spheres $S^{2}_{r}$.

Let $v$ denote the affine function of the vector field $V$, namely a globally defined coordinate $v$ such that $Vv=1$ and $v=0$ on $\Sigma_{0}$ (note that, for small $\epsilon$, $V$ is transversal to $\Sigma_{0}$). By Lie-propagating $Y, \Theta, \Phi$ using the flow of $V$, we obtain the coordinate system $(v,r,\th,\phi)$ defined in a neighbourhood of the horizon $\hh$. Note that  $\partial_{v}=V$ and by virtue of $[V,\Phi]=0$ we have $\partial_{\phi}=\Phi$. Hence,
\begin{equation*}
V=\partial_{v}, \ \ \ \ Y=\partial_{r}, \ \ \ \ \Phi=\partial_{\phi}, \ \ \ \ \Theta=\partial_{\th}. 
\end{equation*}
Clearly the Lie brackets of the above vector fields vanish. \begin{figure}[H]
	\centering
		\includegraphics[scale=0.14]{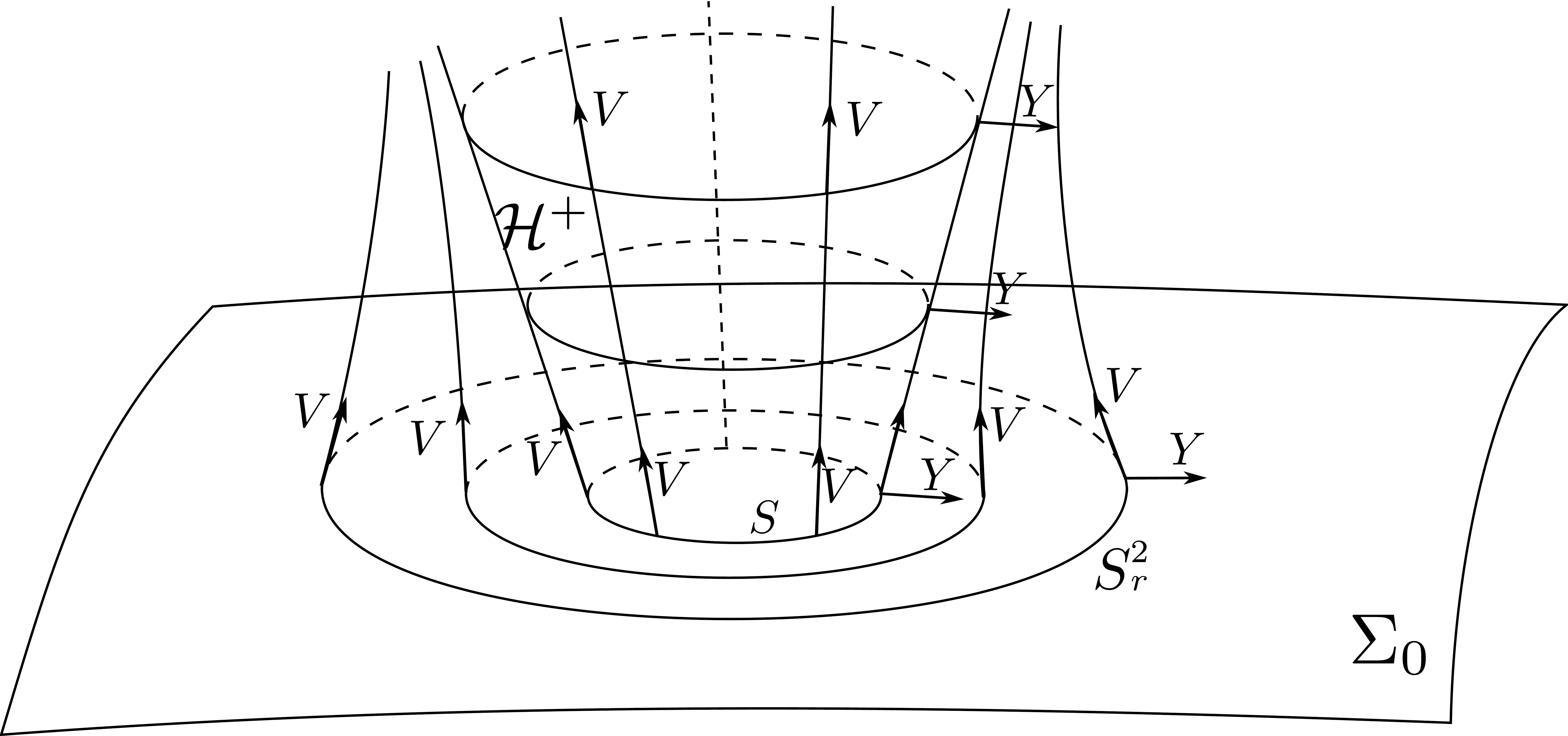}
	\label{fig:p2}
\end{figure}Note also that $\Theta\perp V$ and $\Theta\perp\Phi$, everywhere in the region under consideration. Indeed, by virtue of the fact that $V$ is Killing we have
\begin{equation*}
\lie_{V}g(\Theta,V)=g(\lie_{V}\Theta,V)+g(\Theta,\lie_{V}V)=0.
\end{equation*}
Similarly we obtain that $g(\Theta,\Phi)=0$, which completes the proof of the proposition.
\end{proof}
Let $g_{WZ}=g(W,Z)$ denote the components of the metric $g$ with respect to the coordinate basis $W,Z\in\left\{V,Y,\Theta,\Phi\right\}$ and $g^{WZ}$ denote the components of the inverse metric $g^{-1}$. Summarising, we have $ g_{V\Theta}=0, \,  g_{\Phi\Theta}=0$ \textbf{everywhere}; moreover, the following relations hold \textbf{on the horizon $\hh$}:
\begin{equation}
\begin{split}
&\!\!\!\!\!\!\!\! \!\!\!g_{VV}=0,\ \ \ g_{V\Theta}=0,\ \ \ g_{V\Phi}=0,\ \ \ g_{VY}\neq 0,\\
  \det(g)=&-\left(g_{VY}\right)^{2}\cdot g_{\Theta\Theta}\cdot g_{\Phi\cdot\Phi}=-\left(g_{VY}\right)^{2}\cdot\det(\gi),\\
&\ \ \ \   g^{YY}=0,\ \ \ g^{Y\Theta}=0, \ \ \ g^{Y\Phi}=0,\\
g^{VY}=\frac{1}{g_{VY}}&,\ \ \  g^{\Theta\Theta}=\frac{1}{g_{\Theta\Theta}},\ \ \ g^{\Phi\Phi}=\frac{1}{g_{\Phi\Phi}}, \ \ \ g^{\Theta\Phi}=0,\\
 Y(&g_{VV})=0, \ \ \ Y(g^{YY})=0, \ \ \ Y(g^{Y\Theta})=0.
\label{metric}
\end{split}
\end{equation}
The last set of equalities is a consequence of the fact that the surface gravity $\kappa$ vanishes on $\hh$ and that $\kappa= \Gamma_{VV}^{V}=-\frac{1}{2g_{VY}}Y(g_{VV})$, where $\Gamma_{VV}^{V}$ denotes the Christoffel symbol.

\begin{remark} Whenever $V$ is timelike one can obtain a local decomposition $\m=\Gamma_{1}\times\Gamma_{2}$, such that $\text{dim}\,\Gamma_{1}=\text{dim}\,\Gamma_{2}=2$ and at each point $p\in\m$ we have $T_{p}\Gamma_{1}=\text{span}\left\langle V,\Phi\right\rangle$ and $T_{p}\Gamma_{1}\perp T_{p}\Gamma_{2}$. See \cite{heusler} for more. Clearly, however, this decomposition breaks down at the points where $V$ is null and normal to $\Phi$.
\end{remark}

%Remark: In particular, for orthogonality on the horizon, we do not require papapetrou. see %polish paper; in fact we use it here on the horizon! to get that the horizon integral manifold %is really on the horizon, by uniqueness and constructing it. Hence on the horizon full %understanding. BUT In general the wave equation on the horizon required the knowledge of the %metric on the horizon and also the Christoffel symbols. SO we want to know outside the horizon %too!! Fot this reason, we need the papapetrou assumption.

\subsection{The Vacuum and Electrovacuum Reduction}
\label{sec:TheVacuumAndElectrovacuumReduction}

We next put in context the relevance of the assumptions A1--A4 in the framework of General Relativity. Let $(\m,g)$ satisfy the Einstein-vacuum equations or the Einstein--Maxwell equations and the assumptions A1--A2. 

Regarding assumption A3, the null structure equations for the torsion $\eta$ (see \cite{DC09}) coupled with the extremality condition \eqref{extrem} imply that the Euler characteristic cannot be negative, and moreover, it is equal to $0$ only if the induced metric on the horizon sections is Ricci flat. It is interesting that no use of the second variation of the area is made in order to restrict the topology of extremal horizons (indeed, $\hh$ does not  have to be the future boundary of the past of null infinity as Hawking's topology theorem requires). On the other hand, Chru\'{s}ciel,  Reall and  Tod \cite{nonstaticextrem} have shown that there do not exist \textit{static} vacuum extremal black holes with spherical topology.

Regarding assumption A4, Papapetrou \cite{papapetrou} showed that the assumption $1$ and the Einstein equations (allowing for a cosmological constant $\Lambda\in\rr$) imply that the distribution of the planes orthogonal to $V,\Phi$ is integrable. The Papapetrou theorem has important applications in the black hole uniqueness problem \cite{heusler}.

Another very remarkable feature of extremal horizons is their limited dynamical degrees of freedom in the context of General Relativity. In fact, H\'{a}j\'{i}\u{c}ek \cite{hajicek} and later Lewandowski and Pawlowski \cite{extremalrigidity} and Kunduri and Lucietti \cite{luciettikund} showed a \textit{rigidity} result for extremal horizons, namely that the assumptions A1--A2 and the Einstein equations (either vacuum of electrovacumm) imply that the geometry of the horizon $\hh$ necessarily coincides with the geometry of the horizon of extremal Kerr (in the vacuum case) or extremal Kerr--Newman (in the electrovacuum case). More precisely, in the context of the formulation of \cite{DC09}, the rigidity statement for extremal horizons says that the induced metric $\gi$ of the horizon sections, the torsion $\eta$ and the curvature components $\rho,\,\sigma$ coincide with those of extremal Kerr (or Kerr--Newman). On the other hand, the transversal second fundamental form $\underline{\chi}$ is conserved (i.e.~$\lie_{V}\underline{\chi}=0$ on $\hh$) but cannot be fully determined.

\section{Conservations Law along Extremal Horizons}
\label{sec:ConservationLawAlongExtremalHorizon}

 We will prove that the degeneracy of redshift gives rise to a conservation law along $\hh$ for solutions to the wave equation \eqref{wave}. 

\subsection{The General Case}
\label{sec:TheGeneralCase}

Let $(\m,g)$ satisfy the assumptions A1--A4 of Section \ref{sec:Assumptions}. Consider the coordinate system $(v,r,\th,\phi)$  of Section \ref{sec:TheAdaptedCoordinateSystem} and let $V,Y,\Theta,\Phi$ denote the corresponding coordinate vector fields. Let $S_{0}$ be an axisymmetric section of $\hh$ and denote $S_{\tau}=d\varphi^{V}_{\tau}(S_{0})$ which are manifestly isometric to $S_{0}$. We then have the following
\begin{proposition} There exist smooth bounded  functions $\a,\b,\ga$ on $\hh$ with $\lie_{V}\a=\lie_{V}\b=\lie_{V}\ga=0$ and
such that for all solutions $\psi$ to the wave equation on $(\m,g)$ we have that the quantity 
\begin{equation}
H[\psi](\tau)=\int_{S_{\tau}}\Big(Y\psi+\a\cdot \psi+\b\cdot(V\psi)+\ga\cdot(\Theta\psi)\Big)
\label{conservation}
\end{equation}
is conserved along $\hh$, i.e.~$\lie_{V}H=0$. The above integral is taken with respect to the induced volume form on $S_{\tau}$.
\label{conservationprop}
\end{proposition}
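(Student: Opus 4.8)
The plan is to exploit the divergence form of the wave operator together with the degeneracy of the metric on $\hh$ recorded in \eqref{metric}. Writing $\sqrt{-g}\,\Box_g\psi=\pa_\mu\big(\sqrt{-g}\,g^{\mu\nu}\pa_\nu\psi\big)$ and setting $F^\mu=\sqrt{-g}\,g^{\mu\nu}\pa_\nu\psi$, the wave equation reads $\pa_v F^v+\pa_r F^r+\pa_\th F^\th+\pa_\phi F^\phi=0$. I would integrate this identity over the section $S_\tau$ at fixed $v=\tau$, $r=0$ against $d\th\,d\phi$. Since \eqref{metric} gives $\det g=-(g_{VY})^2\det\gi$ and $g^{VY}=1/g_{VY}$, one has $\sqrt{-g}\,g^{VY}=\pm\sqrt{\det\gi}$ on $\hh$, so the coefficient of $Y\psi$ carried by $F^v$ and $F^r$ is, up to the factor $2$, exactly the induced area density; this is what makes the $Y\psi$ term of \eqref{conservation} appear with the correct weight $dA=\sqrt{\det\gi}\,d\th\,d\phi$. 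The goal is to show that $\int_{S_\tau}\sqrt{-g}\,\Box_g\psi\,d\th\,d\phi=0$ collapses to $\lie_V$ of a single surface integral of the form \eqref{conservation}.

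The first observation is that the two angular fluxes are automatically total angular derivatives and drop out: $\int_{S_\tau}\pa_\phi F^\phi\,d\th\,d\phi=0$ by $2\pi$-periodicity in $\phi$, while $\int_{S_\tau}\pa_\th F^\th\,d\th\,d\phi$ reduces to boundary contributions at the poles $\th=0,\pi$, which vanish because $\sqrt{-g}\,g^{\th\nu}$ degenerates there while $\pa_\nu\psi$ stays bounded for smooth $\psi$. In particular all second-order angular derivatives of $\psi$ disappear after integration. The $v$-flux has fixed limits, so $\int_{S_\tau}\pa_v F^v\,d\th\,d\phi=\lie_V\!\big(\int_{S_\tau}F^v\,d\th\,d\phi\big)$; after discarding its $\Phi\psi$ component — whose coefficient $\sqrt{-g}\,g^{V\Phi}$ is axisymmetric and hence killed by the same $\phi$-periodicity — it contributes precisely the $Y\psi$, $\b\cdot V\psi$ and $\ga\cdot\Theta\psi$ terms, with $\b=g^{VV}/(2g^{VY})$ and $\ga=g^{V\Theta}/(2g^{VY})$ evaluated on $\hh$.

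The heart of the matter is the transversal flux $\int_{S_\tau}\pa_r F^r\big|_{r=0}\,d\th\,d\phi$, and this is where extremality is decisive. Expanding $\pa_r F^r$ produces the dangerous term $\sqrt{-g}\,g^{YY}Y^2\psi$ together with $\pa_r(\sqrt{-g}\,g^{YY})\,Y\psi$; both vanish identically on $\hh$ because $g^{YY}=0$ and $Y(g^{YY})=0$ there, the latter being equivalent to the vanishing of the surface gravity $\kappa=-\tfrac{1}{2g_{VY}}Y(g_{VV})=0$. The analogous $g^{Y\Theta}$-contribution vanishes for the same reason ($Y(g^{Y\Theta})=0$), and the surviving $g^{Y\Phi}$-term, an axisymmetric coefficient times $\Phi\psi$, is again annihilated by the $\phi$-integration. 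What then remains is $\pa_r F^r|_{r=0}=Y(\sqrt{-g}\,g^{VY})\,V\psi+\sqrt{-g}\,g^{VY}\,V(Y\psi)$ modulo the terms already discarded. Here I would invoke the crucial fact that, $V$ being Killing with $[V,Y]=[V,\Theta]=[V,\Phi]=0$, every metric coefficient $g_{WZ}$ — and hence $\sqrt{-g}$, all $g^{WZ}$ and their $Y$-derivatives — is $V$-invariant. Consequently $c\cdot V\psi=\lie_V(c\,\psi)$ and $c\cdot V(Y\psi)=\lie_V(c\,Y\psi)$ for any such $c$, so the transversal flux is itself $\lie_V$ of a surface integral and supplies the $\a\cdot\psi$ term with $\a=Y(\sqrt{-g}\,g^{VY})/(2\sqrt{-g}\,g^{VY})=\tfrac12 Y\log\big|\sqrt{-g}\,g^{VY}\big|$ on $\hh$.

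Adding the two fluxes, the identity becomes $\lie_V\!\big(\pm 2\int_{S_\tau}(Y\psi+\a\psi+\b V\psi+\ga\Theta\psi)\,dA\big)=0$, which is \eqref{conservation} up to the harmless constant; the coefficients $\a,\b,\ga$ are $V$-invariant by construction, so $\lie_V\a=\lie_V\b=\lie_V\ga=0$. The main obstacle I anticipate is twofold. First, one must carry out the extremality cancellation cleanly, confirming that the sole role of $\kappa=0$ is to eliminate the $Y^2\psi$ and $Y\psi$ terms of the transversal flux and that no residual second transversal derivative survives — this is exactly the feature absent in the subextremal case, where the redshift term persists and no such conservation law exists. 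Second, one must establish the smoothness and boundedness of $\a,\b,\ga$ up to the poles and the axis, where $\sqrt{\det\gi}\to 0$ and, by the Remark, $g_{\Theta\Theta}\to\infty$; this rests on the regularity of the components $g_{Y\Theta}$ and $g_{Y\Phi}$ there and on $\a$ being a regular logarithmic transversal derivative of $\sqrt{-g}\,g^{VY}$.
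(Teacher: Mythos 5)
Your argument is correct and, at bottom, runs on the same mechanism as the paper's proof of Proposition \ref{conservationprop}: integrate the wave equation over the sections $S_{\tau}$; use extremality, encoded in $g^{YY}=0$, $Y(g^{YY})=0$, $Y(g^{Y\Theta})=0$ of \eqref{metric}, to annihilate every transversal term that is not a total $V$-derivative; use axisymmetry of the coefficients plus $2\pi$-periodicity to annihilate the $\Phi$-terms; use the degeneration at the poles to annihilate the $\th$-flux; and use stationarity of all coefficients to pull $\lie_{V}$ outside the integral. What you do differently is the bookkeeping: starting from $\sqrt{-g}\,\Box_{g}\psi=\pa_{\mu}\big(\sqrt{-g}\,g^{\mu\nu}\pa_{\nu}\psi\big)$ you never touch a Christoffel symbol, whereas the paper expands $\Box_{g}\psi=\text{tr}_{g}(\nabla^{2}\psi)$, computes $\Gamma^{Y}_{VY},\Gamma^{Y}_{\Theta\Theta},\Gamma^{\Theta}_{VY},\dots$ on $\hh$, and must verify by hand that the coefficient of $\Theta\psi$ equals $\frac{1}{\sqrt{g}}\Theta\big(\sqrt{g}\,g^{\Theta\Theta}\big)$; the divergence identity does this bookkeeping automatically, which is a genuine economy. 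One consequence of your grouping (you discard one copy of the $V\Theta$ cross terms inside the vanishing $\th$-flux, while the paper keeps both copies inside its $V(\cdots)$ bracket) is that your $\ga=g^{V\Theta}/(2g^{VY})$ is half of what the paper's grouping produces, namely $g^{V\Theta}/g^{VY}$, with a correspondingly different $\a$. This is harmless: the two conserved integrals coincide after an integration by parts in $\th$ on $S_{\tau}$, the representation \eqref{conservation} being unique only up to such rearrangements, and the proposition asserts existence.

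One justification, however, is wrong as stated and needs the paper's fix. You dismiss the $\th$-flux boundary terms because ``$\sqrt{-g}\,g^{\th\nu}$ degenerates at the poles while $\pa_{\nu}\psi$ stays bounded for smooth $\psi$''. The second half of this is false for $\nu=\th$: the coordinate $\th$ here is not the round polar angle, and by Remark \ref{rehami} one has $g_{\Theta\Theta}=\frac{A^{2}}{4\pi^{4}}\frac{1}{g_{\Phi\Phi}}\rightarrow\infty$ at the poles, so $\Theta$ has unbounded length there and $\Theta\psi$ is generically unbounded even for smooth $\psi$ (already on the round sphere in this parametrization, the smooth function $\psi=\sin\theta_{\text{std}}\cos\phi$ has $\Theta\psi\sim 1/\sin\theta_{\text{std}}$). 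The boundary terms do still vanish, but the argument must be run as in the paper: write $\sqrt{-g}\,g^{\th\th}\,\Theta\psi=\big(\sqrt{g}\sqrt{g^{\Theta\Theta}}\big)\cdot\big(\sqrt{g^{\Theta\Theta}}\,\Theta\psi\big)$, note that $\sqrt{g^{\Theta\Theta}}\,\Theta$ is a unit vector so the second factor is bounded, and that $\sqrt{g}\sqrt{g^{\Theta\Theta}}=g_{VY}\sqrt{g_{\Phi\Phi}}\rightarrow 0$ at the poles; similarly, on $\hh$ one has $\sqrt{-g}\,g^{\th v}\,V\psi=\mp\big(g_{Y\Theta}/\sqrt{g_{\Theta\Theta}}\big)\sqrt{g_{\Phi\Phi}}\,V\psi\rightarrow 0$, because $g_{Y\Theta}/\sqrt{g_{\Theta\Theta}}=g\big(Y,\Theta/|\Theta|\big)$ is bounded while $V\psi$ is genuinely bounded. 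The same unit-normalization is exactly what shows your $\ga=-g_{Y\Theta}/(2g_{\Theta\Theta})$ is bounded up to the poles, which you correctly single out as the remaining point to check; on that score your proposal is no less complete than the paper, which asserts the smoothness and boundedness of $\a,\b,\ga$ without further detail.
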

\begin{proof}
The wave operator is given by
\begin{equation*}
\begin{split}
\Box_{g}\psi&=\text{tr}_{g}(\nabla^{2}\psi)\\&= g^{VV}\Big(VV\psi-(\nabla_{V}V)\psi\Big)+2g^{VY}\Big(VY\psi-(\nabla_{V}Y)\psi\Big)+2g^{V\Theta}\Big(V\Theta\psi-(\nabla_{V}\Theta)\psi\Big)\\&+2g^{V\Phi}\Big(V\Phi\psi-(\nabla_{V}\Phi)\psi\Big)+g^{YY}\Big(YY\psi-(\nabla_{Y}Y)\psi\Big)+2g^{Y\Theta}\Big(Y\Theta\psi-(\nabla_{Y}\Theta)\psi\Big)\\
&+2g^{Y\Phi}\Big(Y\Phi\psi-(\nabla_{Y}\Phi)\psi\Big)+g^{\Theta\Theta}\Big(\Theta\Theta\psi-(\nabla_{\Theta}\Theta)\psi\Big)+2g^{\Theta\Phi}\Big(\Theta\Phi\psi-(\nabla_{\Theta}\Phi)\psi\Big)\\
&+g^{\Phi\Phi}\Big(\Phi\Phi\psi-(\nabla_{\Phi}\Phi)\psi\Big).
\end{split}
\end{equation*}
In view of the properties \eqref{metric} of the adapted coordinate system we have
\begin{equation*}
g^{YY}=g^{Y\Theta}=g^{Y\Phi}=g^{\Theta\Phi}=0: \text{ on }\hh.
\end{equation*}
Hence, recalling \eqref{extrem} we obtain on $\hh$:
\begin{equation}
\begin{split}
\Box_{g}\psi=&\,g^{VV}\cdot(VV\psi)+2g^{VY}\Big(VY\psi-(\nabla_{V}Y)\psi\Big)+2g^{V\Theta}\Big(V\Theta\psi-(\nabla_{V}\Theta)\psi\Big)\\&+2g^{V\Phi}\Big(V\Phi\psi-(\nabla_{V}\Phi)\psi\Big)+g^{\Theta\Theta}\Big(\Theta\Theta\psi-(\nabla_{\Theta}\Theta)\psi\Big)+g^{\Phi\Phi}\Big(\Phi\Phi\psi-(\nabla_{\Phi}\Phi)\psi\Big).
\end{split}
\label{mmm}
\end{equation}
We first prove that the coefficient of $Y\psi$ in the wave operator restricted on $\hh$ vanishes. Indeed, by expanding the covariant derivatives and, in view of the \eqref{metric}, we compute the Christoffel symbols
\begin{equation*}
\begin{split}
&\Gamma_{VY}^{Y}=0, \ \ \ \ \Gamma_{V\Theta}^{Y}=0, \ \ \ \ \Gamma_{V\Phi}^{Y}=0, \ \ \ \  \Gamma_{\Theta\Theta}^{Y}= 0, \ \ \ \
\Gamma_{\Phi\Phi}^{Y}=0:\text{ on }\hh.
\end{split}
\end{equation*}
Note also that 
\begin{equation*}
\begin{split}
&\Gamma_{VY}^{\Theta}=-\frac{1}{2}g^{\Theta\Theta}\cdot\Theta(g_{VY}),\ \ \ \ \Gamma_{V\Theta}^{\Theta}=0,\ \ \ \ \Gamma_{V\Phi}^{\Theta}=0,\ \ \ \ \\&\Gamma_{\Theta\Theta}^{\Theta}=\frac{1}{2}g^{\Theta\Theta}\cdot\Theta(g_{\Theta\Theta}), \ \ \ \ \Gamma_{\Phi\Phi}^{\Theta}=-\frac{1}{2}g^{\Theta\Theta}\cdot\Theta(g_{\Phi\Phi}).  \\
\end{split}
\end{equation*}
 Hence, the coefficient of $\Theta\psi$ in \eqref{mmm} is
\begin{equation*}
\begin{split}
g^{\Theta\Theta}\Big[g^{VY}\cdot\Theta(g_{VY})-\frac{1}{2}g^{\Theta\Theta}\cdot\Theta(g_{\Theta\Theta})+\frac{1}{2}g^{\Phi\Phi}\cdot\Theta(g_{\Phi\Phi})  \Big]=\frac{1}{\sqrt{g}}\Theta\Big(\sqrt{g}\cdot g^{\Theta\Theta}\Big).
\end{split}
\end{equation*}
Therefore, there exist functions $A=A(\theta),\, B=B(\theta)$ such that 
\begin{equation*}
\begin{split}
\Box_{g}\psi=&\, V\Big(g^{VV}\cdot(V\psi)+2g^{VY}\cdot (Y\psi)+2g^{V\Theta}\cdot(\Theta\psi)+2g^{V\Phi}\cdot(\Phi\psi)+A\cdot\psi\Big)\\&+\Phi\Big(g^{\Phi\Phi}\cdot(\Phi\psi)+B\cdot\psi\Big)+\frac{1}{\sqrt{g}}\Theta\Big(\sqrt{g}\cdot g^{\Theta\Theta}\cdot(\Theta\psi)\Big)
\end{split}
\end{equation*}
\textbf{on the horizon $\hh$}.  We observe that 
\begin{equation*}
\int_{S_{\tau}}\frac{1}{\sqrt{\gi}}\Theta\Big(\sqrt{g}\cdot g^{\Theta\Theta}\cdot(\Theta\psi)\Big)=2\pi\int_{0}^{\pi}\Theta\Big(\sqrt{g}\cdot \sqrt{g^{\Theta\Theta}}\cdot(\sqrt{g^{\Theta\Theta}}\Theta)\psi\Big)d\theta=0,
\end{equation*}
since  $\sqrt{g^{\Theta\Theta}}\Theta$ is a unit vector field and 
\begin{equation*}
\sqrt{g}\cdot \sqrt{g^{\Theta\Theta}}=g_{VY}\cdot\sqrt{\gi}\cdot \sqrt{g^{\Theta\Theta}}=g_{VY}\cdot\sqrt{g_{\Phi\Phi}}\rightarrow 0
\end{equation*}
as $\theta\rightarrow 0$ or $\theta\rightarrow \pi$. Moreover  we have $\int_{S_{\tau}}\Phi f=0,$ for any sufficiently regular function $f$ on $\hh$. Therefore, by integrating $g_{VY}\cdot(\Box_{g}\psi)$ over $S_{\tau}$ and using that $g_{VY}\cdot g^{VY}=1$, we deduce that there exist smooth bounded  functions $\a,\b,\ga$ (which depend only on $\theta$) such that 
\begin{equation*}
V\int_{S_{\tau}}\Big(Y\psi+\a\cdot \psi+\b\cdot(V\psi)+\ga\cdot(\Theta\psi)\Big)=0,
\end{equation*}
which completes the proof.
\end{proof}
We remark that the extremality condition \eqref{extrem} is the key ingredient in obtaining the vanishing of the coefficient of $Y\psi$ in the wave operator (see \eqref{mmm}) on $\hh$.

\subsection{A Hierarchy of Conservation Laws in Spherical Symmetry}
\label{sec:AHierarchyOfConservationLaws}

In case the spacetime metric is spherically symmetric (i.e.~the rotation group $SO(3)$ acts on $(\m,g)$ by isometry), then the conservation law of Proposition \ref{conservationprop} can be interpreted as a conservation law for the spherical mean of $\psi$. (The spherical mean of a function is defined to be the projection of the function on the kernel of the spherical Laplacian $\lapp$. In this case, the spheres are the orbits of the $SO(3)$ action). It turns out that in spherically symmetric spacetimes one can extend this result to obtain a conservation law along $\hh$ for each of the projections of $\psi$ on the eigenspaces of $\lapp$, i.e.~a conservation law for each angular frequency of $\psi$. Indeed, the metric on such backgrounds can be expressed in the form~\cite{DC91}:
\begin{equation}
g=-D(v,r)dv^{2}+2dvdr+K^{-1}\,g_{\mathbb{S}^{2}},
\label{spmetric}
\end{equation}
where $K=K(r)$ denotes the Gaussian curvature $K$ of the spheres and  $g_{\mathbb{S}^{2}}$  the standard metric on the unit sphere $\mathbb{S}^{2}$.  As in the previous section, we denote
\[V=\partial_{v},\ \ \ Y=\partial_{r}. \]
Assuming stationarity then 
\begin{equation}
D(v,r)=D(r)=-g_{VV}.
\label{metric1}
\end{equation}
We suppose that the hypersurface $r=r_{\hh}$ is null and that $D(r_{\hh})=0$ on $\hh$. Then the extremality condition $\nabla_{V}V=0$ along  $\hh$ is incorporated in the condition 
\begin{equation}
D'(r_{\hh})=Y\big(g_{VV}\big)=0 \text{ on }\hh.
\label{spsyext}
\end{equation}
Let, moreover, $\psi_{l}$ denote the projection of $\psi$ on the eigenspace of $\lapp$ which corresponds to the eigenvalue $-l(l+1)$. If $\psi$ solves the wave equation  \eqref{wave} and $(\m,g)$ is spherically symmetric, then $\psi_{l}$ also solves the wave equation.
We now extend the Proposition \ref{conservationprop} to a hierarchy of conservation laws for spherically symmetric spacetimes:
\begin{proposition}
Let $(\m,g)$ be a spherically symmetric spacetime such that the metric $g$ satisfies the conditions \eqref{spmetric}, \eqref{metric1} and the extremality condition \eqref{spsyext}. Then for all $l\in\mathbb{N}$ there exist constants $\beta_{i},i=0,1,\dots,l,$ which depend only on $l$, such that for all solutions $\psi$ of the wave equation the quantity
\begin{equation}
H_{l}[\psi]=Y^{l+1}\psi_{l}+\sum_{i=0}^{l}\beta_{i}\cdot \big(Y^{i}\psi_{l}\big)
\label{sppropcon}
\end{equation}
is conserved along the null geodesics of $\hh$, provided the following relation holds on $\hh$:
\begin{equation}
D''(r_{\hh})=\Big.YY(g_{VV})\Big|_{r=r_{\hh}}=2K(r_{\hh}).
\label{additional}
\end{equation}
In other words, in this case, $H_{l}[\psi]$ is conserved as a function on the sections $S_{\tau}$.
\label{spsyprop}
\end{proposition}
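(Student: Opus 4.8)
The plan is to reduce the wave equation to a single radial equation for each angular frequency and then extract the conservation law by differentiating this equation transversally to $\hh$ and exploiting a numerical resonance created by the extremality conditions together with \eqref{additional}. First I would compute $\Box_{g}\psi_{l}$ in the coordinates $(v,r)$ of \eqref{spmetric}. The inverse metric has $g^{vv}=0$, $g^{vr}=1$, $g^{rr}=D$ on the $(v,r)$ block and $g^{AB}=K\,(g_{\mathbb{S}^{2}})^{AB}$ on the angular factor, while $\sqrt{-g}=K^{-1}\sqrt{g_{\mathbb{S}^{2}}}$. Since $K=K(r)$ is independent of the angles, the angular part of the wave operator reduces to $K\lapp\psi_{l}=-Kl(l+1)\psi_{l}$, and a direct computation of the $(v,r)$ part yields, writing $'=\partial_{r}=Y$,
\[
\Box_{g}\psi_{l}=D\,YY\psi_{l}+2VY\psi_{l}+\Big(D'-\tfrac{K'}{K}D\Big)Y\psi_{l}-\tfrac{K'}{K}V\psi_{l}-Kl(l+1)\psi_{l}=0.
\]
Restricting to $\hh$ using $D(r_{\hh})=0$ and the extremality condition \eqref{spsyext}, the $l=0$ case reads $V\big(2Y\psi_{0}-\tfrac{K'}{K}\psi_{0}\big)=0$, reproducing the spherically symmetric specialization of Proposition \ref{conservationprop}; the genuine content is the case $l\geq1$.

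The crux is to differentiate the bulk equation transversally. Applying $Y^{k}$ and restricting to $\hh$, the three terms $D\,YY\psi_{l}$, $2VY\psi_{l}$ and $(D'-\tfrac{K'}{K}D)Y\psi_{l}$ shed their lowest contributions because $D$, $D'$ and $D'-\tfrac{K'}{K}D$ all vanish on $\hh$. Tracking the surviving terms, one finds that the coefficient of the top pure-$Y$ term $Y^{k}\psi_{l}$ equals
\[
\tfrac12\,D''(r_{\hh})\,k(k+1)-K(r_{\hh})\,l(l+1).
\]
This is exactly where \eqref{additional} enters: under $D''(r_{\hh})=2K(r_{\hh})$ this coefficient becomes $K(r_{\hh})\big[k(k+1)-l(l+1)\big]$, which vanishes \emph{precisely} at $k=l$ and is nonzero for $0\leq k<l$. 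Hence $Y^{l}(\Box_{g}\psi_{l})\big|_{\hh}=0$ takes the form
\[
2VY^{l+1}\psi_{l}+\sum_{j=0}^{l}c_{j}\,VY^{j}\psi_{l}+\sum_{j=0}^{l-1}d_{j}\,Y^{j}\psi_{l}=0,
\]
where the $c_{j},d_{j}$ are constants built from the $r$-derivatives of $K$ at $r_{\hh}$; the absence of a $Y^{l}\psi_{l}$ term is the resonance.

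It remains to show that the pure-$Y$ terms $Y^{j}\psi_{l}$ with $0\leq j\leq l-1$ are themselves total $V$-derivatives on $\hh$. I would prove this by induction on $j$: the relation $Y^{j}(\Box_{g}\psi_{l})\big|_{\hh}=0$ has top pure-$Y$ coefficient $K(r_{\hh})[j(j+1)-l(l+1)]\neq0$ for $j<l$, so it can be solved for $Y^{j}\psi_{l}$ as a total $V$-derivative (the collected $V$-derivative terms, extractable since $[V,Y]=0$ and the coefficients are constant on $\hh$) plus lower pure-$Y$ terms $Y^{i}\psi_{l}$ with $i<j$; by the inductive hypothesis the latter are $V$-derivatives, hence so is $Y^{j}\psi_{l}$. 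The base case $j=0$ uses $Kl(l+1)\psi_{l}=V\big(2Y\psi_{l}-\tfrac{K'}{K}\psi_{l}\big)$, which requires $l\geq1$. Substituting these expressions into the displayed $k=l$ equation turns its entire left-hand side into $V$ applied to a fixed constant-coefficient combination $Y^{l+1}\psi_{l}+\sum_{i=0}^{l}\beta_{i}Y^{i}\psi_{l}$, which is the asserted $H_{l}[\psi]$, and $\lie_{V}H_{l}=0$ follows.

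The main obstacle is the resonance bookkeeping: correctly identifying which terms survive on $\hh$ after repeated $Y$-differentiation, and verifying that \eqref{additional} is exactly the condition making the $k=l$ coefficient vanish while leaving all $k<l$ coefficients invertible. Once this algebraic structure is isolated, the induction converting lower transversal derivatives into $V$-derivatives is routine, and the constants $\beta_{i}$ are determined purely by $l$ and the values $K^{(m)}(r_{\hh})$. As a consistency check one may verify that for extremal Reissner--Nordstr\"{o}m ($D=(1-M/r)^{2}$, $K=r^{-2}$, $r_{\hh}=M$) the identity $D''(M)=2K(M)$ holds automatically, so that \eqref{additional} is indeed satisfied there and the hierarchy recovers the conservation laws of \cite{aretakis1,aretakis2}.
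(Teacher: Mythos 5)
Your proposal is correct and follows essentially the same route as the paper's proof: the same reduction of $\Box_{g}$ to the form $D\,YY\psi+2VY\psi+\beta\,V\psi+R\,Y\psi+K\lapp\psi$, the same application of $Y^{k}$ with restriction to $\hh$, the same identification that the coefficient of $Y^{k}\psi_{l}$ equals $\tfrac12 D''k(k+1)-Kl(l+1)$ (the paper writes this as $\binom{k}{2}D''+kR'-l(l+1)K$ with $R'=D''$ on $\hh$), and the same induction expressing $Y^{j}\psi_{l}$, $j\leq l-1$, as total $V$-derivatives before substituting into the $k=l$ equation. The resonance bookkeeping you flag as the main obstacle is exactly the content of the paper's equation \eqref{commutat} and the ensuing coefficient computation, so there is no gap.
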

\begin{proof}
We compute
\begin{equation*}
\begin{split}
\Box_{g}\psi=D\cdot (YY\psi)+2(VY\psi)+\beta\cdot(V\psi)+R\cdot(Y\psi)+K\cdot\big(\lapp_{\mathbb{S}^{2}}\psi\big),
\end{split}
\end{equation*}
where $\beta=\frac{YK^{-1}}{K^{-1}}$ and $R=\left[\frac{YK^{-1}}{K^{-1}}D+YD\right]$.

For $l=0$ then $\lapp_{\mathbb{S}^{2}}\psi_{0}=0$, and since $D(r_{\hh})=(YD)(r_{\hh})=0$ we obtain
\[V\Big(Y\psi+\beta_{0}\psi\Big)=0: \text{ on }\hh, \]
where $\beta_{0}=\left.\left(\frac{YK^{-1}}{2K^{-1}}\right)\right|_{r=r_{\hh}}$.
For $l\geq 1$ we need the condition \eqref{additional}. First note that 
\begin{equation}
\begin{split}
0=Y^{k}\big(\Box_{g}\psi_{l}\big)=&\ D\cdot (Y^{k+2}\psi_{l})+2Y^{k+1}V\psi_{l}+V\big(Y^{k}(\beta\psi_{l})\big)+R\cdot(Y^{k+1}\psi_{l})\\
&+\sum_{i=1}^{k}\binom{k}{i}\big(Y^{i}D\big)\cdot \big(Y^{k-i+2}\psi_{l}\big)+\sum_{i=1}^{k}\binom{k}{i}\big(Y^{i}R\big)\cdot \big(Y^{k-i+1}\psi_{l}\big)\\&-l(l+1)\cdot Y^{k}\big(K\cdot \psi_{l}\big).
\label{commutat}
\end{split}
\end{equation}
The coefficients of $Y^{k+2}\psi_{l}, Y^{k+1}\psi_{l}$ vanish on $\hh$. In view of \eqref{additional}, the coefficient of $Y^{k}\psi_{l}$ on $\hh$ is equal to
\[\binom{k}{2}\cdot D''+\binom{k}{1}\cdot R'-l(l+1)\cdot K=\binom{k+1}{2}\cdot D''-\binom{l+1}{2}\cdot D'',  \]
and, hence, is non-zero if and only if $l\neq k$. Therefore, using an inductive argument, one can easily see that for $k\leq l-1$ there exist constants $\alpha_{i}^{k},i=1,\dots,k+1,$ which depend only on $l$, such that 
\begin{equation}
Y^{k}\psi_{l}=\sum_{i=0}^{k+1}\alpha_{i}^{k}\cdot (VY^{i}\psi_{l}).
\label{last}
\end{equation} 
Applying now \eqref{commutat} for $k=l$, we have that the coefficients of $Y^{l+2}\psi_{l},Y^{l+1}\psi_{l},Y^{l}\psi_{l}$ vanish on $\hh$. Therefore, only the terms $VY^{j}\psi_{l},j=0,1,...,l+1$ and $Y^{j}\psi_{l},j=0,1,...,l-1$ remain. Applying \eqref{last} completes the proof of the conservation law for $\psi_{l},l\geq 1$.
\end{proof}

We remark that the conservation law for the spherical mean does not require the condition \eqref{additional}. Note also that extremal Reissner--Nordstr\"{o}m satisfies the condition \eqref{additional}, since $D(r)=\left(1-\frac{M}{r}\right)^{2},M>0,$ and  $K(r)=\frac{1}{r^{2}}$. These conservation laws (as well as the trapping effect on $\hh$) are in fact one of the main obstructions to obtaining the stability results of \cite{aretakis1,aretakis2}. 

In fact, we could have generalised Proposition \ref{spsyprop} so as to include more general axisymmetric extremal horizons by imposing additional geometric conditions similar to that of \eqref{additional}. However, in order not to obscure the main ideas with technicalities, we established the hierarchy of conservation laws only in the spherically symmetric setting. We will derive an analogous hierarchy of conservation laws for extremal Kerr backgrounds.

We next show non-decay and blow-up results that follow from the conservation laws.

\section{Non-Decay and Blow-up along $\hh$}
\label{sec:NonDecayAndBlowUpAlongHh}

Consider the general setting of Section \ref{sec:TheGeometricSetting}. We have the following
\begin{mytheo}\textbf{(Non-Decay)}
Let $(\m,g)$ satisfy the assumptions A1--A4 of Section \ref{sec:Assumptions} and $V,Y,\Theta,\Phi$ be the vector fields defined in Section \ref{sec:TheAdaptedCoordinateSystem}. Then, for all solutions $\psi$ to the wave equation either the non-generic condition $H[\psi]=0$  is satisfied (where $H[\psi]$ is defined in Proposition \ref{conservationprop}) or the quantity 
\[\int_{S_{\tau}}\Big(|\psi|+|Y\psi|+|\Theta\psi|+|V\psi|\Big)\]
\textbf{does not decay} along $\hh$. 
\label{nondecay} 
\end{mytheo}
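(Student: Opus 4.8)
The plan is to prove the contrapositive: I would assume that the quantity
\[\int_{S_{\tau}}\Big(|\psi|+|Y\psi|+|\Theta\psi|+|V\psi|\Big)\]
does decay to zero along $\hh$ as $\tau\to+\infty$, and deduce that then necessarily $H[\psi]=0$. The only inputs needed are Proposition \ref{conservationprop} itself, namely the conservation of the functional \eqref{conservation} and the boundedness of the coefficient functions $\a,\b,\ga$ asserted there.

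First I would invoke the conservation law: since $\lie_V H=0$, the map $\tau\mapsto H[\psi](\tau)$ is constant, so its common value equals $H[\psi](\tau)$ for \emph{every} $\tau$. Next, using that $\a,\b,\ga$ are smooth bounded functions on $\hh$ (in fact $V$-invariant and depending only on $\theta$), I fix a constant $C>0$ with $|\a|,|\b|,|\ga|\le C$ pointwise. Applying the triangle inequality to the integrand of \eqref{conservation} then yields, for each $\tau$,
\[|H[\psi]|=|H[\psi](\tau)|\le \int_{S_{\tau}}\Big(|Y\psi|+|\a|\,|\psi|+|\b|\,|V\psi|+|\ga|\,|\Theta\psi|\Big)\le C\int_{S_{\tau}}\Big(|\psi|+|Y\psi|+|\Theta\psi|+|V\psi|\Big).\]

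Finally, the left-hand side is a fixed constant while the right-hand side is assumed to tend to zero; letting $\tau\to+\infty$ forces $|H[\psi]|\le 0$, hence $H[\psi]=0$, which is precisely the non-generic alternative. There is no genuine analytic obstacle here: the entire content sits in the conservation law of Proposition \ref{conservationprop}, and this theorem merely records its most direct dynamical consequence. The one point meriting care is that the constant $C$ bounding $\a,\b,\ga$ is uniform in $\tau$; this is guaranteed by the $V$-invariance of these functions, which is what makes the sections $S_{\tau}$ isometric to $S_{0}$ (with matching induced volume forms) and thus allows the comparison above to be carried out against a single constant along the entire flow.
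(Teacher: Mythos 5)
Your proposal is correct and is exactly the argument the paper has in mind: the paper's proof consists solely of the line ``Immediate from Proposition \ref{conservationprop},'' and what you have written out (conservation of $H[\psi]$ plus the uniform bound on $\a,\b,\ga$ giving $|H[\psi]|\le C\int_{S_{\tau}}\big(|\psi|+|Y\psi|+|\Theta\psi|+|V\psi|\big)$, so decay forces $H[\psi]=0$) is precisely the routine verification being left to the reader, with the minor cosmetic remark that the constant should be taken as $\max(1,C)$ to also absorb the $|Y\psi|$ term.
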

\begin{proof}
Immediate from Proposition \ref{conservationprop}.
\end{proof}
 We next show that if we impose an additional geometric condition on the horizon (which holds for all known extremal black holes in general relativity), then either $\psi$ does not decay or the higher order derivatives of $\psi$ blow up along the horizon $\hh$.
\begin{mytheo}\textbf{(Blow-up)}
Assume $(\m,g)$ is as in Theorem \ref{nondecay} and moreover such that 
\begin{equation}
g(Y,\Theta)=0
\label{add}
\end{equation}
everywhere on $\hh$ and such that 
\begin{equation}
g_{VY}\cdot (YY(g^{YY}))
\label{constant}
\end{equation}
is constant and non-zero along $\hh$. Then,  unless $\psi$ and the tangential to $\hh$ derivatives of $\psi$ do not decay and $H[\psi]=0$, there is a second order derivative of $\psi$ which \textbf{blows up} along the horizon $\hh$.
\label{blowup}
\end{mytheo}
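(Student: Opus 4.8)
The plan is to differentiate the conservation law of Proposition \ref{conservationprop} in the transversal direction $Y$ and show that the resulting quantity is \emph{not} conserved, so that its failure to be conserved forces a second-order derivative to grow. Concretely, I would commute the wave operator with $Y$, i.e.~compute $Y(\Box_g\psi)=0$ restricted to $\hh$, just as was done for the coefficient of $Y\psi$ in the proof of Proposition \ref{conservationprop}. In that proof the key structural fact was that the coefficient of $Y\psi$ in $\Box_g\psi\big|_{\hh}$ vanishes because of extremality \eqref{extrem} together with $g^{YY}=g^{Y\Theta}=g^{Y\Phi}=0$ on $\hh$. When I apply $Y$ and restrict to $\hh$, the term $g^{YY}(YY\psi-\dots)$ contributes $\big(Y(g^{YY})\big)(YY\psi)$, and more importantly the second $Y$-derivative $YY(g^{YY})$ enters the coefficient of $Y\psi$. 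The assumption \eqref{constant}, that $g_{VY}\cdot YY(g^{YY})$ is constant and \emph{nonzero} on $\hh$, is precisely what prevents this coefficient from vanishing, breaking the exact-divergence structure that gave conservation at first order.

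The mechanism I would exploit is the following: after commuting with $Y$ and integrating $g_{VY}\cdot Y(\Box_g\psi)$ over $S_\tau$, the condition \eqref{add} ($g(Y,\Theta)=0$ on $\hh$) should kill the cross terms that would otherwise spoil the $\Theta$-integration by parts (the same argument that made $\int_{S_\tau}\frac{1}{\sqrt{\gi}}\Theta(\dots)=0$ and $\int_{S_\tau}\Phi f=0$ in the previous proof). What survives is a relation of the schematic form
\begin{equation*}
V\!\int_{S_\tau}\!\Big(Y^2\psi+\text{(lower-order in $Y$) terms}\Big)=c\int_{S_\tau}Y\psi+\text{(tangential terms)},
\end{equation*}
where $c=g_{VY}\cdot YY(g^{YY})\neq 0$ by \eqref{constant}. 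The left side measures the time-evolution of the second-order quantity, while the right side contains $\int_{S_\tau}Y\psi$, which by Theorem \ref{nondecay} (and the conserved $H[\psi]$) generically tends to a \emph{nonzero} constant rather than to zero. Thus the right-hand side does not integrate to zero in $\tau$, forcing $\int_{S_\tau}Y^2\psi$ to grow linearly in advanced time, hence some second-order derivative blows up along $\hh$. The non-decay alternative is exactly the escape clause in the statement: if $\psi$ and its tangential derivatives \emph{do} decay and $H[\psi]=0$, the forcing term vanishes asymptotically and no blow-up is forced.

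I would carry this out in three steps. First, establish the commuted identity $Y(\Box_g\psi)\big|_{\hh}=0$ and read off the coefficient of $YY\psi$ (which vanishes since $g^{YY}=0$ on $\hh$, making the leading term manageable) and the coefficient of $Y\psi$ (which is $\tfrac12 YY(g^{YY})$ up to the computed Christoffel contributions). Second, integrate against $g_{VY}$ over $S_\tau$, using \eqref{add} to discard the $\Theta$- and $\Phi$-coupling terms by the boundary/periodicity arguments already in Proposition \ref{conservationprop}, obtaining the displayed evolution equation for $\int_{S_\tau}Y^2\psi$ with forcing proportional to $\int_{S_\tau}Y\psi$. Third, integrate in $\tau$ and invoke Theorem \ref{nondecay}: generically $\int_{S_\tau}Y\psi\not\to 0$, so the cumulative forcing grows without bound, yielding the claimed blow-up. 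The main obstacle I anticipate is the bookkeeping in the second step: the commutator $[Y,\Box_g]$ produces several Christoffel-symbol terms, and I must verify that \eqref{add} together with the horizon relations \eqref{metric} really does make \emph{all} the would-be obstructing coefficients (those multiplying $VY\psi$, $\Theta\psi$, $V\Phi\psi$, etc.) either integrate to zero over $S_\tau$ or organize into the conserved quantity $H[\psi]$, so that the only genuinely non-conserved contribution is the $c\int_{S_\tau}Y\psi$ forcing term. Getting that cancellation clean — rather than being left with extra uncontrolled second-order terms — is the crux of the argument.
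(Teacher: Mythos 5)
Your proposal is correct and follows essentially the same route as the paper's own proof: the paper likewise commutes the (divergence form of the) wave operator with $Y$, uses \eqref{add} to annihilate the $g^{Y\Theta}$ cross term (via $YY(g^{Y\Theta})=0$ on $\hh$) and the horizon relations \eqref{metric} to reduce everything to a total $V$- or $\Phi$-derivative plus the single obstructing term $\sqrt{g}\cdot\big(YY(g^{YY})\big)\cdot Y\psi$, then integrates over $S_{\tau}$ and in $\tau$ so that \eqref{constant} and the conserved quantity $H[\psi]\neq 0$ force linear-in-$\tau$ growth of a second-order quantity. The only cosmetic difference is that the paper applies $Y$ to $\sqrt{g}\cdot(\Box_{g}\psi)$ rather than commuting $Y$ through the expanded covariant form, which streamlines exactly the Christoffel bookkeeping you flag as the crux.
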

\begin{proof}
We have:
\begin{equation*}
\begin{split}
&\ \ \ \ Y\big(\sqrt{g}\cdot (\Box_{g}\psi)\big)=\\&YV(\sqrt{g}\cdot g^{VV}\cdot {V}\psi)+{YV}(\sqrt{g}\cdot g^{VY}\cdot {Y}\psi)+{YV}(\sqrt{g}\cdot g^{V\Theta}\cdot {\Theta}\psi)+{YV}(\sqrt{g}\cdot g^{V\Phi}\cdot {\Phi}\psi)\\
&{YY}(\sqrt{g}\cdot g^{YV}\cdot {V}\psi)+{YY}(\sqrt{g}\cdot g^{YY}\cdot {Y}\psi)+{YY}(\sqrt{g}\cdot g^{Y\Theta}\cdot {\Theta}\psi)+{YY}(\sqrt{g}\cdot g^{Y\Phi}\cdot {\Phi}\psi)\\
&{Y\Theta}(\sqrt{g}\cdot g^{\Theta V}\cdot {V}\psi)+{Y\Theta}(\sqrt{g}\cdot g^{\Theta Y}\cdot {Y}\psi)+{Y\Theta}(\sqrt{g}\cdot g^{\Theta\Theta}\cdot {\Theta}\psi)+{Y\Theta}(\sqrt{g}\cdot g^{\Theta\Phi}\cdot {\Phi}\psi)\\
&{Y\Phi}(\sqrt{g}\cdot g^{\Phi V}\cdot {V}\psi)+{Y\Phi}(\sqrt{g}\cdot g^{\Phi Y}\cdot {Y}\psi)+{Y\Phi}(\sqrt{g}\cdot g^{\Phi\Theta}\cdot {\Theta}\psi)+{Y\Phi}(\sqrt{g}\cdot g^{\Phi\Phi}\cdot {\Phi}\psi)\\
\end{split}
\end{equation*}
First observe that \eqref{add} implies that $YY(g^{Y\Theta})=0$ on $\hh$. Since $YY(g_{VV})\neq 0$ on $\hh$, the only terms that do not involve the $V$ or $\Phi$ derivative are precisely the following:
\begin{equation}
\begin{split}
\sqrt{g}\cdot \big(YY(g^{YY})\big)\cdot Y\psi,\ \ \ \ 
\Theta Y(\sqrt{g}\cdot g^{\Theta\Theta}\cdot \Theta\psi).
\end{split}
\label{terms}
\end{equation}
Therefore, since \[\int_{S_{\tau}}\frac{1}{\sqrt{\gi}}\cdot Y\big(\sqrt{g}\cdot(\Box_{g}\psi)\big)=0,\]
all the terms involving the $\Phi$ derivative and the second term in \eqref{terms} involving the $\Theta Y$ derivative vanish on $\hh$. Hence, by integrating along the null geodesics of $\hh$ we obtain an integral identity of the form
\begin{equation}
\int_{0}^{\tau}\int_{S_{\tau}}V\Big(f\big[\psi,D\psi,DD\psi\big]\Big)\, d\gi_{S_{\tau}} d\tau + \int_{0}^{\tau}\int_{S_{\tau}} (Y\psi)\, d\gi_{S_{\tau}} d\tau=0,  
\label{integ}
\end{equation}
where $D\psi, \, DD\psi$ are expressions of the first and second order derivatives of $\psi$, respectively. Hence, if $\psi$ and its tangential to $\hh$ derivatives decay along $\hh$ then $\int_{S_{\tau}}Y\psi\rightarrow \ell= H[\psi]$, and therefore, if $H[\psi]\neq 0$ then the second term in \eqref{integ} blows up. Applying the fundamental theorem of calculus along the null geodesics of $\hh$ for the first term in \eqref{integ} we obtain that the higher order derivatives of $\psi$ blow up asymptotically along $\hh$. 

\end{proof}

We have shown that under fairly general assumptions on extremal horizons $\hh$, scalar perturbations generically do not decay and the higher order derivatives blow up pointwise along $\hh$. The following question, however, arises: What can we say about the energy of $\psi$ on the spacelike hypersurfaces $\Sigma_{\tau}=\varphi_{\tau}^{V}(\Sigma_{0})$ that cross $\hh$?  It turns out that proving blow-up for the energy of higher order derivative of $\psi$ requires showing dispersion of $\psi$ away from $\hh$. This can, of course, only be done once other analytical features have been completely understood, such as the trapping and superradiance. See \cite{tria, lecturesMD}. For this reason, in the next section, we focus  on the fundamental extremal black holes which arise in the context of General Relativity.

\section{Scalar Instability of Extremal Black Holes in General Relativity}
\label{sec:LinearInstabilityOfExtremalBlackHolesInGeneralRelativity}

We next specialise the theory developed in previous sections to the case of vacuum and electrovacuum horizons. However, in view of our discussion in Section \ref{sec:TheVacuumAndElectrovacuumReduction}, any degenerate electrovacuum horizon must be isometric to the event horizon of a member of the extremal Kerr--Newman family. Moreover, Chru\'{s}ciel and Tod \cite{chrus} proved that all static electrovacuum black hole spacetimes are isometrically diffeomorphic to the Reissner--Nordstr\"{o}m or the standard Majumdar--Papapetrou spacetime.

The case of extremal Reissner--Nordstr\"{o}m was treated in \cite{aretakis1,aretakis2}. For reference, if $M>0$ denotes the mass parameter, then the quantity \[H_{0}^{\text{RN}}[\psi]=Y\psi+\frac{1}{M}\psi\]
is conserved along $\hh$ for all spherically symmetric solutions $\psi$ to the wave equation.

We next consider the case of Majumdar--Papapetrou and extremal Kerr spacetimes. 

\subsection{Majumdar--Papapetrou Multi Black Holes}
\label{sec:MajumdarPapapetrouMultiBlackHoles}

The Majumdar--Papapetrou multi black hole spacetimes (see \cite{hartlehaw72}) constitute a family of solutions to the Einstein--Maxwell equations with $N$ extremal black holes, for some $N\in\mathbb{N}$. The mass $M_{i}$ enclosed by a section of $\hh_{i}$  is equal to the charge $e_{i}$ inside the same surface, i.e., the black holes remain in equilibrium by the consequent balance of their electrostatic repulsion and gravitational attraction.

The Majumdar--Papapetrou spacetimes are static  with defining Killing vector field $V$; the vector field $V$ being normal on each of the degenerate event horizons $\hh_{i}$, $i=1,...,N$ (and thus $\nabla_{V}V=0$ on $\hh_{i}$) and timelike in the exterior region. These spacetimes are not spherically symmetric or even axisymmetric; however, each of the horizons $\hh_{i}$ is spherically symmetric. In fact, for each $\hh_{i}$, there is a coordinate system $(v,r_{i},\theta,\phi)\in\mathbb{R}\times (-\epsilon,\epsilon)\times (0,\pi)\times (0,2\pi)$ covering a neighbourhood of $\hh_{i}$ such that $\hh_{i}=\left\{r_{i}=0\right\}$ and $\partial_{v}=V$ (see \cite{hartlehaw72}). Denoting  
\[Y=\partial_{r_{i}},\ \ \Theta=\partial_{\theta},\ \ \Phi=\partial_{\phi}\]
we also have
\begin{equation*}
\Theta\perp\ V,\ \Theta\perp Y, \ \Phi\perp V, \ \Phi\perp Y
\end{equation*}
everywhere in the domain of the system. Furthermore, $\Theta,\Phi$ are  tangential to sections $S_{\tau}$ of $\hh_{i}$ and Killing on $\hh_{i}$. Moreover, we have
\begin{equation*}
\lie_{Y}\gi=0, \ \lie_{Y}\text{det}(g)=0:\text{ on }\hh_{i},
\end{equation*}
where $\gi$ denotes the induced metric on the $(\theta,\phi)$ spheres. The above properties of the metric allow us to apply our framework and hence we conclude that there exists a bounded function $\alpha$ such that $\lie_{V}\alpha=0$ and such that the quantity
\[H_{0}^{\text{MP}}[\psi](\tau)=\int_{S_{\tau}}\Big(Y\psi+\alpha\cdot\psi\Big)\]
is conserved along $\hh$. Hence, by virtue of the above properties, Theorems \ref{nondecay} and \ref{blowup} hold for the Majumdar--Papapetrou spacetimes.

Obtaining, however, dispersive estimates (even away from $\hh_{i}$ for all $i=1,...,N$) remains an open problem.

\subsection{Extremal Kerr}
\label{sec:ExtremalKerr}

 In view of the results discussed in Section \ref{sec:TheVacuumAndElectrovacuumReduction}, extremal Kerr satisfies all the assumptions of Section \ref{sec:Assumptions}. Of course, one could verify this by inspection of the metric, which in ingoing Eddington--Finkelstein coordinates $(v,r,\theta,\phi^{*})$ takes the form
\begin{equation*}
g=g_{vv}dv^{2}+2g_{v\phi^{*}}dvd\phi^{*}+g_{\phi^{*}\phi^{*}}(d\phi^{*})^{2}+g_{\theta\theta}d\theta^{2}+2g_{vr}dvdr+2g_{r\phi^{*}}drd\phi^{*},
\end{equation*}
where
\begin{equation}
\begin{split}
&g_{vv}=-\left(1-\frac{2Mr}{\rho^{2}}\right), \  \ \  g_{\phi^{*}\phi^{*}}=\frac{(r^{2}+a^{2})^{2}-a^{2}\Delta\sin^{2}\theta}{\rho^{2}}\sin^{2}\theta, \  \ \  g_{\theta\theta}=\rho^{2}
\\&   g_{vr}=1, \ \ \   g_{v\phi^{*}}=-\frac{2M^{2}r\sin^{2}\theta}{\rho^{2}}, \  \ \  g_{r\phi^{*}}=-M\sin^{2}\theta,
\end{split}
\label{edi}
\end{equation}
with $M>0$ a constant and 
\begin{equation}
\Delta=(r-M)^{2}, \ \ \ \ \ \rho^{2}=r^{2}+M^{2}\cos^{2}\theta. 
\label{basic}
\end{equation}
The event horizon $\hh$ corresponds to $r=M$. For completeness, we include the computation for the inverse of the metric in $(v,r,\theta,\phi^{*})$ coordinates:
\begin{equation*}
\begin{split}
&g^{vv}=\frac{M^{2}\sin^{2}\theta}{\rho^{2}}, \ \ \   g^{rr}=\frac{\Delta}{\rho^{2}},\ \ \   g^{\phi^{*}\phi^{*}}=\frac{1}{\rho^{2}\sin^{2}\theta}, \  \ \  g^{\theta\theta}=\frac{1}{\rho^{2}}
\\& \ \ \ \ \ \ \  \ \   g^{vr}=\frac{r^{2}+M^{2}}{\rho^{2}} ,\  \ \   g^{v\phi^{*}}=\frac{M}{\rho^{2}}, \ \ \   g^{r\phi^{*}}=\frac{M}{\rho^{2}},
\end{split}
\end{equation*}
The metric is indeed stationary and axisymmetric and the corresponding quantity \eqref{constant} is constant. Note that the vector field $\partial_{v}$ is not null on the horizon $\hh$. If we denote $T=\partial_{v},\, Y=\partial_{r},\,\Theta=\partial_{\theta},\,\Phi=\partial_{\phi}$, then  the vector field $V=T+\frac{1}{2M}\Phi$ is null and normal to $\hh$. It follows that the Papapetrou condition A4 is satisfied.

 The wave operator is given by
\begin{equation}
\begin{split}
\Box_{g}\psi=&\frac{M^{2}}{\rho^{2}}\sin^{2}\theta\left(TT\psi\right)+\frac{2(r^{2}+M^{2})}{\rho^{2}}\left(TY\psi\right)+\frac{\Delta}{\rho^{2}}(YY\psi)\\&+\frac{2M^{2}}{\rho^{2}}(T\Phi\psi)+\frac{2M}{\rho^{2}}(Y\Phi\psi)+\frac{2r}{\rho^{2}}(T\psi)+\frac{\Delta'}{\rho^{2}}(Y\psi) +\frac{1}{\rho^{2}}\lapp_{(\theta,\phi^{*})}\psi,
\label{operator}
\end{split}
\end{equation}
where $\lapp_{(\theta,\phi^{*})}\psi$ denotes the standard Laplacian on $\mathbb{S}^{2}$ with respect to $(\theta,\phi^{*})$. If $S_{\tau}$ are the sections $v=\tau$ of $\hh$ we then obtain that the quantity
\begin{equation}
H_{0}^{\text{Kerr}}[\psi](\tau)=\int_{S_{\tau}}\Big(M\sin^{2}\th\,(T\psi)+4M\,(Y\psi)+2\psi\Big)
\label{conskerr}
\end{equation}
is conserved along $\hh$ (note that the volume form of $S_{\tau}$ is $2M\sin\theta\, d\theta\, d\phi^{*}$). In fact, we can also obtain a hierarchy of conservation laws analogous to that of Proposition \ref{spsyprop} for the spherically symmetric case. Let $E_{l}=E_{l}(\theta,\phi^{*})$ be an eigenfunction of $\lapp$ with corresponding eigenvalue equal to $-l(l+1)$. By restricting $Y(\rho^{2}\Box_{g}\psi)=0$ on $\hh$, multiplying with $E_{1}$ and using Stokes' theorem we obtain 
\begin{equation*}
\begin{split}
\int_{S_{\tau}}\!\!\bigg(\!\Big[M^{2}\sin^{2}\theta\left(TTY\psi\right)+4M^{2}\left(TYY\psi\right)+2M(TY\psi)+2(Y\psi)\Big]\!\cdot\! E_{1} +(Y\psi)\cdot \lapp E_{1}\bigg)\!=\!0.
\end{split}
\end{equation*}
Since $\lapp E_{1}=-2E_{1}$, we end up only with terms which involve the $T$ derivative. Hence, the quantity 
\[H_{1}^{\text{Kerr}}[\psi](\tau)=\int_{S_{\tau}}\bigg(\Big[4M\left(YY\psi\right)+M\sin^{2}\theta\left(TY\psi\right)+2(Y\psi)+2\psi\Big]\cdot E_{1}\bigg)  \]
is conserved along $\hh$. Similarly we obtain the following
\begin{proposition}
Let $l\in\mathbb{N}$ and $E_{l}$ denote a spherical harmonic such that $\lapp E_{l}=-l(l+1)E_{l}$. For any function $f$ we also denote  $f_{l}=f\cdot E_{l}$. There exist constants $\alpha,\,\beta_{i},i=0,1,...,l+1$ which depend only on $M$ such that for all solutions $\psi$ to the wave equation on extremal Kerr backgrounds, the quantity
\[ H_{l}^{\text{Kerr}}[\psi](\tau)=\int_{S_{\tau}}\bigg(Y^{l+1}\psi_{l}+\alpha\cdot\Big(\sin^{2}\th \cdot TY^{l}\psi\Big)_{l}+\sum_{i=0}^{l}\beta_{i}\,Y^{i}\psi_{l}  \bigg) \]
is conserved along $\hh$, i.e.~it is independent of $\tau$.
\label{hierkerr}
\end{proposition}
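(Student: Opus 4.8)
The plan is to mirror the proof of Proposition \ref{spsyprop}, exploiting that after clearing the $1/\rho^{2}$ denominators the Kerr wave operator \eqref{operator} has coefficients that are \emph{polynomials} in $r$. Writing $\rho^{2}\Box_{g}\psi=0$ as such an identity, I would apply $Y^{l}$ to it, restrict to $\hh=\{r=M\}$, multiply by a fixed axisymmetric (zonal) spherical harmonic $E_{l}$, and integrate over $S_{\tau}$ against the volume form $2M\sin\th\,d\th\,d\phi^{*}$. The only $r$-dependent coefficients attached to $T$-free terms are $\Delta$ (on $YY\psi$), $\Delta'$ (on $Y\psi$) and the constant $1$ (on $\lapp\psi$); all other terms of \eqref{operator} carry a $T=\pa_{v}$ or a $\Phi$.

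The central observation is that genuine extremality forces $\Delta=(r-M)^{2}$ to have a \emph{double} zero at $r=M$, so on $\hh$ one has $\Delta=\Delta'=0$, $Y^{2}\Delta=\Delta''=2$, and $Y^{j}\Delta=0$ for $j\geq 3$. Leibniz-expanding $Y^{l}(\Delta\,YY\psi)$ and $Y^{l}(\Delta'\,Y\psi)$, only the terms carrying $Y^{2}\Delta$ and $Y^{1}\Delta'$ survive on $\hh$: hence the coefficients of $Y^{l+2}\psi$ (multiplying $\Delta$) and of $Y^{l+1}\psi$ (multiplying $\Delta'$ and $\Delta$) vanish, and both contributions collapse to multiples of $Y^{l}\psi$. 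Combined with $\lapp(Y^{l}\psi)$, which after integrating by parts and using $\lapp E_{l}=-l(l+1)E_{l}$ contributes $-l(l+1)\int_{S_{\tau}}Y^{l}\psi\,E_{l}$, the total coefficient of the $T$- and $\Phi$-free term $Y^{l}\psi$ is
\[
2\binom{l}{2}+2l-l(l+1)=0 .
\]
(Applying $Y^{k}$ instead gives the coefficient $k(k+1)-l(l+1)=(k-l)(k+l+1)$, nonzero precisely when $k\neq l$; this is exactly the resonance at $k=l$ underlying the conservation law, in complete analogy with \eqref{additional} and the borderline cancellation in \eqref{last}.)

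With these three top derivatives eliminated, every remaining term of $Y^{l}(\rho^{2}\Box_{g}\psi)\big|_{\hh}$ carries at least one $T$ or one $\Phi$. The $\Phi$-terms integrate to zero against an axisymmetric $E_{l}$ after integrating by parts in $\phi^{*}$, exactly as $\int_{S_{\tau}}\Phi f=0$ in Proposition \ref{conservationprop}. Since the $\th$-dependent coefficients, $E_{l}$ and the volume form are all independent of $v=\tau$ and the sections are the level sets $\{v=\tau\}$, each surviving $T$-term $\int_{S_{\tau}}(\text{coeff})\,T(\cdot)\,E_{l}$ equals $\tfrac{d}{d\tau}\int_{S_{\tau}}(\text{coeff})(\cdot)\,E_{l}$. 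Pulling out exactly one $T$ from every term converts the integrated identity into $\tfrac{d}{d\tau}H_{l}^{\text{Kerr}}[\psi](\tau)=0$; reading off the resulting integrand from the $T$-carrying terms of \eqref{operator} and normalizing the leading coefficient produces precisely $Y^{l+1}\psi_{l}+\alpha(\sin^{2}\th\,TY^{l}\psi)_{l}+\sum_{i}\beta_{i}Y^{i}\psi_{l}$, with $\alpha,\beta_{i}$ depending only on $M$ and $l$.

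The main point to get right is the triple cancellation on $\hh$: the top two $Y$-derivatives are killed by extremality, the borderline $Y^{l}\psi$ by the resonance $k=l$, and one must verify that \emph{no} lower-order $T$-free term survives. This is in fact where the Kerr computation is simpler than Proposition \ref{spsyprop}: because $\Delta$ is exactly quadratic (unlike $D$ in extremal Reissner--Nordstr\"{o}m, whose higher $r$-derivatives are nonzero), the Leibniz expansions of $\Delta\,YY\psi$ and $\Delta'\,Y\psi$ terminate immediately and leave no lower-order $T$-free remainder, so the inductive substitution \eqref{last} is not needed here. The only residual care is technical: self-adjointness of $\lapp$ forces $E_{l}$ to be the zonal harmonic so that the $\Phi$-terms drop, and one should confirm integrability together with the absence of boundary contributions at the poles $\th=0,\pi$, where the $\sin^{2}\th$ weights and the factor $2M\sin\th$ in the volume form degenerate.
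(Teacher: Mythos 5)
Your proof is correct and follows essentially the same route as the paper, which carries out exactly this computation for $l=1$ (restricting $Y(\rho^{2}\Box_{g}\psi)=0$ to $\hh$, multiplying by $E_{1}$, and using Stokes' theorem together with $\lapp E_{1}=-2E_{1}$ so that only $T$-derivative terms survive) and then asserts the general case ``similarly''. Your two supplementary observations are accurate and consistent with the paper's intent: $E_{l}$ must indeed be taken axisymmetric for the $\Phi$-terms to drop (cf.\ the projection to the zeroth azimuthal frequency in Theorem \ref{kerrinsta}), and the exact quadraticity of $\Delta=(r-M)^{2}$ together with the $r$-independence of the remaining $T$-free coefficients is precisely why the inductive substitution \eqref{last} needed in Proposition \ref{spsyprop} is superfluous here.
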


In order to obtain definitive instability results for extremal Kerr we use the results of \cite{aretakis3} which we recall below:

1. \textbf{Pointwise decay for $\psi$}: For all axisymmetric solutions $\psi$ which arise from regular initial data we have  $\left\|\psi\right\|_{L^{\infty}(\Sigma_{\tau})}\rightarrow 0$ as $\tau\rightarrow+\infty$. Here $\Sigma_{\tau}=\varphi_{\tau}^{V}(\Sigma_{0})$ and $\Sigma_{0}$ is a spacelike hypersurface which crosses $\hh$.

2. \textbf{Decay of degenerate (at $\hh$) energy of $\psi$}: For all axisymmetric solutions $\psi$ which arise from regular initial data we have $E_{\Sigma_{\tau}\cap\left\{r\geq r_{0}>M\right\}}[\psi]\rightarrow 0$ as $\tau\rightarrow+\infty$. 

Combining the methods of \cite{aretakis2}, the results of the present paper (note, in particular, that the condition \eqref{add} holds on extremal Kerr) and \cite{aretakis3} and by projecting to the zeroth azimuthal frequency we obtain the following
\begin{mytheo}\textbf{(Scalar Instability of Extremal Kerr)}
There exists a constant $c>0$ which depends only on $M$  such that for all solutions to the wave equation on extremal Kerr we have 
\begin{enumerate}
	\item \textbf{Non-decay}: \[\underset{S_{\tau}}{\textrm{sup}}\ \big|Y\psi\big|\ \geq \ c\big|H_{0}[\psi]\big|,\]
	along $\hh$ and $H_{0}[\psi]$ is a constant which depends only on the initial data and is generically non-zero.
	\item  \textbf{Pointwise blow-up}: \[\underset{S_{\tau}}{\textrm{sup}}\ \big|Y^{k}\psi\big|\ \geq\  c\big|H_{0}[\psi]\big|\tau^{k-1}, \]
	asympotically along $\hh$ for all $k\geq 2$.	 
\item \textbf{Energy blow-up}: For generic solutions $\psi$ to the wave equation we have 
\[E_{\Sigma_{\tau}}[\psi]=\int_{\Sigma_{\tau}}\left\langle J^{N}[Y^{k}\psi], n_{\Sigma_{\tau}} \right\rangle \ \, \longrightarrow +\infty,\] for all $k\geq 2$, as $\tau\rightarrow+\infty$. Here $J^{N}$ is the natural energy current and $n_{\Sigma_{\tau}}$ is the unit normal to $\Sigma_{\tau}$.
\end{enumerate}

\label{kerrinsta}
\end{mytheo}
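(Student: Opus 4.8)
The plan is to combine the Kerr conservation law \eqref{conskerr} with the dispersive decay of \cite{aretakis3}, after first reducing to the axisymmetric sector. Since $\Phi=\partial_{\phi^{*}}$ is Killing, the zeroth azimuthal component $\psi^{(0)}=\frac{1}{2\pi}\int_{0}^{2\pi}\psi\,d\phi^{*}$ again solves \eqref{wave}, and because every coefficient in \eqref{conskerr} is $\phi^{*}$-independent we have $H_{0}^{\text{Kerr}}[\psi]=H_{0}^{\text{Kerr}}[\psi^{(0)}]$; it therefore suffices to treat $\psi^{(0)}$, for which the pointwise decay $\|\psi^{(0)}\|_{L^{\infty}(\Sigma_{\tau})}\to 0$ and the degenerate energy decay $E_{\Sigma_{\tau}\cap\{r\geq r_{0}\}}[\psi^{(0)}]\to 0$ of \cite{aretakis3} are at our disposal. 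On axisymmetric functions $V=T$, so the flow along the generators of $\hh$ is generated by $T$, and $T\psi^{(0)}$ is again an axisymmetric solution with decaying $L^{\infty}$ norm.

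For the non-decay statement I would argue directly from conservation of $H_{0}^{\text{Kerr}}$. As $\psi^{(0)}$ and $T\psi^{(0)}$ both decay uniformly, the first and third terms of \eqref{conskerr} tend to zero, so $4M\int_{S_{\tau}}Y\psi\to H_{0}^{\text{Kerr}}[\psi]$; writing $H_{0}[\psi]:=\frac{1}{4M}H_{0}^{\text{Kerr}}[\psi]$, the spherical mean of $Y\psi$ converges to this data-dependent constant. Bounding the section integral by the supremum over the fixed-area $S_{\tau}$ gives $\sup_{S_{\tau}}|Y\psi|\geq c|H_{0}[\psi]|$, and since $H_{0}[\psi]=0$ is a single linear condition on the data it fails only on a non-generic set.

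For the pointwise blow-up I would iterate the transport structure of the wave equation on $\hh$. Commuting \eqref{wave} with $Y^{k-1}$ and restricting to $\{r=M\}$, the degeneracy $\Delta=\Delta'=0,\ \Delta''=2$ makes the coefficient of $Y^{k+1}\psi$ vanish and reduces the $Y^{k}\psi$ contribution to the transport term $4M^{2}\,V(Y^{k}\psi)$, while leaving a \emph{nonzero} zeroth-order coefficient $c_{k}$ in front of $Y^{k-1}\psi$ (proportional to $k(k-1)$, and $\theta$-independent by the constancy of \eqref{constant}); the condition \eqref{add} guarantees that no mixed $Y\Theta$ term survives. Taking the spherical mean $P_{0}$ removes the angular Laplacian, leaving at each level $k\geq 2$ a transport equation of the form $4M^{2}\,V(P_{0}Y^{k}\psi)+c_{k}\,P_{0}Y^{k-1}\psi=V(P_{0}F_{k})$, where $F_{k}$ is built from $\psi,T\psi$ and their $Y$-derivatives. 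The right-hand side is a total $V$-derivative of quantities that converge as $\tau\to\infty$ (each being, up to decaying terms, a conserved integral of the solution $T^{j}\psi$), so integrating along the generators and feeding in $P_{0}Y\psi\to H_{0}[\psi]\neq 0$ yields $P_{0}Y^{k}\psi(\tau)\sim c\,H_{0}[\psi]\,\tau^{k-1}$ by induction on $k$; hence $\sup_{S_{\tau}}|Y^{k}\psi|\geq c|H_{0}[\psi]|\tau^{k-1}$, together with the sharp two-sided rates $\|Y^{k}\psi\|_{L^{2}(S_{\tau})}\sim\tau^{k-1}$ that the energy estimate requires.

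The main obstacle is the energy blow-up, the one assertion that is not a local consequence of the conservation law and for which the global dispersion of \cite{aretakis3} is indispensable; here I would follow the energy argument of \cite{aretakis2}. Because $N$ is timelike and transversal to $\hh$, the energy density is coercive near the horizon, so $E_{\Sigma_{\tau}}[Y^{k}\psi]\gtrsim\int_{0\leq r\leq c/\tau}\|Y^{k+1}\psi(r,\cdot)\|_{L^{2}(S)}^{2}\,dr$. From the sharp rates above, $\|Y^{k+1}\psi\|_{L^{2}(S_{\tau})}\gtrsim|H_{0}[\psi]|\tau^{k}$ on $\{r=0\}$, while the matching upper bound $\|Y^{k+2}\psi\|_{L^{2}}\lesssim\tau^{k+1}$ keeps $\|Y^{k+1}\psi(r,\cdot)\|_{L^{2}(S)}\gtrsim\tau^{k}$ across the thin collar $r\leq c/\tau$; integrating gives $E_{\Sigma_{\tau}}[Y^{k}\psi]\gtrsim\tau^{2k-1}\to+\infty$ for $k\geq 2$. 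The delicate point is that this whole chain rests on the sharp two-sided horizon rates, which in turn require the decay of $\psi$ and of its tangential derivatives together with dispersion in $\{r\geq r_{0}>M\}$; these are exactly the features (trapping, superradiance) established only for Kerr in \cite{aretakis3}, which is why the energy blow-up is obtained here and not in the general setting of Theorems \ref{nondecay}--\ref{blowup}.
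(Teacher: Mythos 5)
Your proposal follows essentially the same route as the paper, whose own ``proof'' is the single prescription of combining the conservation law \eqref{conskerr}, the axisymmetric decay results of \cite{aretakis3}, and the commutation and energy-coercivity methods of \cite{aretakis2} after projecting to the zeroth azimuthal frequency --- precisely the three ingredients you assemble, in the same roles. One minor imprecision: for $k\geq 3$ the quantities $F_{k}$ on the right-hand side of your transport equation do not converge (e.g.\ the term $P_{0}Y^{k-1}\psi\sim\tau^{k-2}$ grows), but since after integration along the generators they remain lower order than the main contribution $\int_{0}^{\tau}P_{0}Y^{k-1}\psi\,d\tau'\sim\tau^{k-1}$, your induction still closes.
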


\section{Acknowledgements}
\label{sec:Acknowledgements}

I would like to thank Mihalis Dafermos for his invaluable help and insights. I would also like to thank Igor Rodnianski, Amos Ori and Gustav Holzegel for several very stimulating discussions. Special thanks go to James Lucietti and Volker Schlue for carefully reading the original manuscript and suggesting several improvements.

\section{Addendum}
\label{sec:Addendum}

Since the first appearance of the present paper on the arXiv there have been rapid developments regarding instabilities of extremal black holes. These developments were motivated by the analysis of the present paper and for this reason we summarize some of these contributions below:

\medskip

1. Lucietti and Reall \cite{hj2012} have extended the conservation laws of the present paper to electromagnetic and linearized gravitational perturbations on extremal Kerr backgrounds. 

\medskip

2. Murata \cite{murata2012} has generalized the conservation laws for scalar, electromagnetic and linearized gravitational perturbations on extremal horizons in vacuum in  arbitrary dimensions. 

\medskip

3. Bizon and Friedrich \cite{bizon2012} have shown that the conservation laws of the present paper on exactly extremal Reissner--Nordstr\"{o}m  correspond to the Newman-Penrose constants at null infinity under a conformal transformation of the background which exchanges the (future) event horizon with (future) null infinity. The authors made similar comments for the conservation laws on extremal Kerr. 

\medskip

4. The relation between the conserved quantities of the present paper and the Newman--Penrose constants was also independently observed by Lucietti, Murata, Reall and Tanahashi \cite{hm2012}. The same authors studied analytically and numerically the late time behavior of massive and massless scalars on extremal Reissner--Nordstr\"{o}m. An important conclusion of their numerical analysis is that scalar instabilities are present even if the scalar perturbation is initially supported away from the horizon (in which case all the conserved quantities are zero). 

\medskip

5. The author  rigorously showed in \cite{aretakis2012} that  perturbations which are initially supported away from the horizon  indeed (generically) develop instabilities in the future confirming the numerical analysis of \cite{hm2012}. 

\medskip

6. Using the results of the present paper, the author has shown in \cite{aretakis2013} that extremal black holes exhibit a genuine scalar \textit{non-linear} instability which is not present for subextremal black holes or the Minkowski spacetime.

% http://en.wikibooks.org/wiki/LaTeX/Bibliography_Management
\bibliographystyle{acm}
\bibliography{../../../bibliography}

\end{document}